\newcommand{\comment}[1]{}
\numberwithin{equation}{section}
\newcounter{mnotecount}
\newcommand{\mnotex}[1]
{\protect{\stepcounter{mnotecount}}$^{\mbox{\footnotesize $\bullet$\themnotecount}}$ 
\marginpar{
\raggedright\tiny\em
$\!\!\!\!\!\!\,\bullet$\themnotecount: #1} }
\newcommand{\ce}{\mathcal{E}}
\newcommand{\cV}{\mathcal{V}}
\newcommand{\bg}{\boldsymbol{g}}
\newcommand{\si}{\sigma}
\newcommand{\cc}{\textbf{c}}
\newcommand{\cT}{\mathcal{T}}
\newcommand{\ID}{\mathrm{ID}}
\begin{document}

\newtheorem{theorem}{Theorem}
\newtheorem{proposition}{Proposition}
\newtheorem{definition}{Definition}
\newtheorem{example}{Example}
\newtheorem{remark}{Remark}

\title[Higher fundamental forms of asymptotically de Sitter spacetimes]{Higher fundamental forms of the conformal boundary of asymptotically de Sitter spacetimes}
                     
\author[A.\ R. Gover]{A. Rod Gover}
\address{Department of Mathematics, The University of Auckland, Private Bag 92019, Auckland 1142, New Zealand}
\email{r.gover@auckland.ac.nz}

\author[J. Kopi\'nski]{Jaros\l aw Kopi\'nski}
\address{Center for Theoretical Physics, Polish Academy of Sciences,
Warsaw, Poland}
\email{jkopinski@cft.edu.pl}

\begin{abstract}
We provide a partial characterization of the conformal infinity of
asymptotically de Sitter spacetimes by deriving constraints that
relate the asymptotics of the stress-energy tensor with conformal
geometric data. The latter is captured using recently defined objects,
called higher conformal fundamental forms. For the boundary
hypersurface, these generalize to higher order the trace-free part of
the second form.
\end{abstract}

\maketitle

\section{Introduction}
Spacetimes with  positive cosmological constant $\Lambda$ have
attracted increasing attention in mathematical relativity in recent
years, see e.g. \cite{a1,a2, senovilla}. This was motivated by the
observational implications that the Universe is best described if
$\Lambda > 0$ is included in the Einstein field equations
\cite{lambda}. Moreover, asymptotically de Sitter spacetimes are also
used in the context of the dS/CFT correspondence \cite{strominger} and
in the Conformal Cyclic Cosmology scenario \cite{ccc}, which relies on
the positivity of the cosmological constant.

The natural way to study the asymptotic structure of a spacetime
$(\widetilde{M}, \widetilde{g}_{ab})$ is through the conformal
Einstein field equations formalism, introduced in \cite{cefe}. In this
approach, one considers a conformal extension (unphysical spacetime)
$\left(M, \mathtt{g}_{ab}\right)$ of $(\widetilde{M},
\widetilde{g}_{ab})$ -- a four-dimensional manifold with the boundary
$\Sigma$ such that $\widetilde{M}$ can be identified with the interior
of $M$ and the metric $\widetilde{g}_{ab}$ is singular on
$\Sigma$. Certain global problems associated with the solution
$(\widetilde{M}, \widetilde{g}_{ab} )$ of the Einstein field equations
can then be studied in terms of the local analysis in the
neighbourhood of $\Sigma$.

Under appropriate conditions regarding the conformal rescaling of the
matter fields one can show that the conformal Einstein field equations
form a regular system of partial differential equations on the
unphysical spacetime $\left(M, \mathtt{g}_{ab} \right)$, see
e.g. \cite{cefemax} for the case where gravitational field is coupled
to the Maxwell and Yang-Mills fields. It should be noted that if a
non-vanishing cosmological constant is included in the system, the
conformal boundary is an umbilic hypersurface. This is a prime example
of the constraints on the conformal fundamental forms of $\Sigma$, as
such hypersurface has vanishing trace-free extrinsic curvature.

The other approach to describe asymptotically de Sitter spacetimes is
based on the Fefferman-Graham power series expansion of the metric,
which stems from the theory of conformal invariants of manifolds
of arbitrary dimensions \cite{fg}. The same type of construction
appeared earlier in a somewhat similar setting of solving Einstein
field equations with matter fields and a positive cosmological
constant \cite{starobinsky}. The key feature of this approach is that
the conformally rescaled bulk metric and the physical stress-energy
tensor $\widetilde{T}_{ab}$ are expanded in terms of the geodesic
distance to the conformal boundary and the Einstein field equations
are solved order by order. An example of such procedure is given in
\cite{tod} and \cite{nurek} in the construction of asymptotically de
Sitter aeons in the Conformal Cyclic Cosmology model.

Our work here utilizes the tractor calculus in the study of conformal
infinity of spacetimes with positive cosmological constant. It is an
efficient and effective tool for studying conformal invariants and
invariant operators in conformal geometry. The natural connection
between tractor calculus and general relativity stems from the fact
that in the study of asymptotic properties of spacetimes one considers
their conformal extension, which focuses on the causal structure while
abandoning the notion of distance. An extended discussion of the
tractor calculus can be found in a review \cite{Curry-G}, while its
application in general relativity can be found for example in
\cite{CGmass,herfray, wally,Borthwick,LT}, among other works.

The main result of this paper builds on the notion of conformal
fundamental forms introduced in \cite{fundforms} and aims to utilize
it in the setting of asymptotically de Sitter spacetimes with the
stress-energy tensor $\widetilde{T}_{ab}$. We will focus on the most
commonly assumed decay rates of $\widetilde{T}_{ab}$, and formulate
the main result in terms of constraints that relate matter fields
along the conformal boundary to its conformal fundamental forms
$\mathring{\mathcal{K}}^{(d)}_{ab}$, $d=2,3,4,5$. It turns out that 
this amounts to showing how the divergence of suitable  projected part of the 
Cotton, Weyl, and Bach tensors on
$\Sigma$ are related to the asymptotics of the stress-energy tensor there.

It should be noted that, for a given decay rate of matter fields, the
definition of conformal fundamental forms given in \cite{fundforms}
can only be applied directly to derive the form of a certain number of
those objects in terms of the intrinsic geometry of the conformal
boundary alone -- a contribution from $\widetilde{T}_{ab}$ would appear
in them, when na\"{i}vely using the definition beyond this range. In
principle, the theory could be extended to generate such objects to arbitrarily   high order, and for any decay rate of matter fields.  Here we avoid
these difficulties and our conformal fundamental form will precisely
match the ones from \cite{fundforms}, modulo the sign of the norm of
the normal vector to the conformal boundary and the asymptotic value
of the scalar curvature (determined by the cosmological
constant). This aligns well with the conformal treatment of
asymptotically de Sitter spacetimes widely used in the literature and
reveals conformal fundamental forms simply linked to the well-known
Weyl, Cotton and Bach tensors.

The main result of this article has the form of the following theorem.
\begin{theorem} \label{mth}
Let $(\widetilde{M}, \widetilde{g}_{ab} )$ be a four-dimensional asymptotically de Sitter spacetime with the stress-energy tensor $\widetilde{T}_{ab}$ and the positive cosmological constant $\Lambda$ admitting a conformal extension $(M, \mathtt{g}_{ab} )$ with
\begin{equation}
\widetilde{g}_{ab} = \Omega^{-2} \mathtt{g}_{ab}, \quad  \widetilde{T}_{ab} = \Omega^q \mathtt{T}_{ab}, \quad q  \in \{0,1,2 \},
\end{equation}
where $\mathtt{T}_{ab}$ is the (regular everywhere) unphysical stress-energy tensor. Then, 
\begin{align} 
& \left(q-2 \right) n^b \mathtt{T}_{a b} +  n_a  \mathtt{T} = 0 \quad \mathrm{on} \quad \Sigma, \label{mthconst1} \\
& \nabla_n \left[ \left(q-2 \right) \mathtt{T}_{ab} n^b +  n_a \mathtt{T} \right] -  \nabla_b \mathtt{T}_a{}^b = 0 \quad \mathrm{on} \quad \Sigma,
\end{align}
where
\begin{equation}
n_a n^a = -1 + \frac{2   \mathtt{H}}{\sqrt{3 \Lambda}} \Omega + \mathcal{O}\left( \Omega^2 \right),
\end{equation}
$\mathtt{H}$ is the mean curvature of $\Sigma$, and $\mathtt{T} := \mathtt{g}^{cd } \mathtt{T}_{cd}$. Moreover, for $q=0, 2$ the $q+3$ conformal fundamental form $\mathring{\mathcal{K}}^{(q+3)}_{ab}$ of the conformal infinity $\Sigma$ is related to the intrinsic trace-free part of $\mathtt{T}_{ab} $, i.e.

\begin{equation} \label{mtheq}
\mathring{\mathcal{K}}^{(q+3)}_{ab} = C \left( q, \Lambda \right) \mathring{\top} \left( \mathtt{T}_{ab} \right) \quad \mathrm{on} \quad \Sigma,
\end{equation}
where ($\mathring{\top}$) $\top$ denotes the (trace-free) projection on $\Sigma$. If $q \geq 1$, then the divergence of the fourth conformal fundamental form $\mathring{\mathcal{K}}^{(4)}_{ab}$ of $\Sigma$ satisfies the following constraint,
\begin{equation}
\begin{split}
\overline{\nabla}^b \mathring{\mathcal{K}}^{(4)}_{ab} =
\begin{cases}
\left( \frac{\Lambda}{3} \right)^{3/2}  \left( \frac{1}{3} \overline{\nabla}_a \mathtt{T}  - \sqrt{\frac{\Lambda}{3}} j_a  \right) \quad \mathrm{for} \quad q=1, \\
- \frac{\Lambda^2}{9}  \top \left(  n^b \mathtt{T}_{ab}\right)  \quad \mathrm{for} \quad q=2
\end{cases}
\end{split}
\end{equation}
on $\Sigma$, where $\overline{\nabla}_a$ is the induced Levi-Civita
connection on $\Sigma$ and $j_a$ is defined by the expansion
$\mathtt{T}_{a b} n^b = \Omega j_a + \mathcal{O}(\Omega)$. The
$\mathring{\mathcal{K}}^{(4)}_{ab}$ is otherwise undetermined by the
local data on the conformal boundary $\Sigma$ but a constraint
\begin{equation} \label{mthconst2}
\mathtt{T}_{ab} = - n_a n_b \mathtt{T}\quad \mathrm{on} \quad \Sigma
\quad \mathrm{for} \quad q=1
\end{equation}
arises.
\end{theorem}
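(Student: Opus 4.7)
The plan is to work directly with the unphysical form of the Einstein equation and extract constraints order by order in $\Omega$, then match the resulting curvature quantities against the definitions of the forms $\mathring{\mathcal{K}}^{(d)}_{ab}$ from \cite{fundforms}. Starting from $\widetilde{G}_{ab}+\Lambda\widetilde{g}_{ab}=\widetilde{T}_{ab}$ together with the conformal rescaling formulas for the Ricci and scalar curvature under $\widetilde{g}_{ab}=\Omega^{-2}\mathtt{g}_{ab}$, one gets an expression whose LHS contains terms proportional to $\Omega^{-2}\nabla^{c}\Omega\nabla_{c}\Omega$ and $\Omega^{-1}\nabla_{a}\nabla_{b}\Omega$, and whose RHS is $\Omega^{q-2}\mathtt{T}_{ab}$. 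Taking the trace and isolating the $\Omega^{-2}$ coefficient yields $n^{c}n_{c}=-1+\mathcal{O}(\Omega)$; pushing to the next order and using umbilicity of $\Sigma$ (forced by $\Lambda>0$) produces the stated expansion $n^{c}n_{c}=-1+\tfrac{2\mathtt{H}}{\sqrt{3\Lambda}}\Omega+\mathcal{O}(\Omega^{2})$.

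From here, multiplying the rearranged Einstein equation by $\Omega^{2-q}$ and restricting to $\Sigma$ kills the singular part provided the $\Omega^{0}$ coefficient $(q-2)n^{b}\mathtt{T}_{ab}+n_{a}\mathtt{T}$ vanishes, which is exactly \eqref{mthconst1}. The second equation follows either by applying $n^{c}\nabla_{c}$ to this identity and reading off the $\Omega^{1}$-coefficient, or equivalently by translating the contracted Bianchi identity $\widetilde{\nabla}^{b}\widetilde{T}_{ab}=0$ into the unphysical conformal frame and then restricting to $\Sigma$; the two routes agree.

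To identify the higher forms with projections of $\mathtt{T}_{ab}$, I would use the tractor characterization of $\mathring{\mathcal{K}}^{(d)}_{ab}$ in \cite{fundforms} as obstructions to the tangential jets of the scale tractor along $\Sigma$. In dimension four these translate into: $\mathring{\mathcal{K}}^{(3)}_{ab}$ as a tangential projection of the Cotton tensor, $\mathring{\mathcal{K}}^{(4)}_{ab}$ as a particular tangential-normal-normal component of the Weyl tensor, and $\mathring{\mathcal{K}}^{(5)}_{ab}$ as a tangential projection of the Bach tensor. Substituting the conformal Einstein equation into these curvature expressions converts them, after some algebra using the expansion of $n^{c}n_{c}$ and the first constraint \eqref{mthconst1}, directly into $C(q,\Lambda)\,\mathring{\top}(\mathtt{T}_{ab})$ for $q=0$ and $q=2$.

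The hard part is the $q=1$ case, where $\mathring{\mathcal{K}}^{(4)}_{ab}$ is genuinely not captured by local data on $\Sigma$. My plan is to apply $\overline{\nabla}^{b}$ to the Weyl-based expression for $\mathring{\mathcal{K}}^{(4)}_{ab}$ and rewrite it via the second Bianchi identity as a divergence of the Schouten tensor plus lower curvature terms; after invoking the Einstein equation and the expansion $\mathtt{T}_{ab}n^{b}=\Omega j_{a}+\mathcal{O}(\Omega)$, this collapses to the stated combination of $\overline{\nabla}_{a}\mathtt{T}$ and $j_{a}$. The extra constraint \eqref{mthconst2} then emerges as the integrability condition required for this reduction to be consistent: it is precisely the condition that kills the tangential-tangential and tangential-normal pieces of $\mathtt{T}_{ab}$ at $\Sigma$, which for $q=1$ would otherwise spoil the regularity of the Weyl curvature components defining $\mathring{\mathcal{K}}^{(4)}_{ab}$.
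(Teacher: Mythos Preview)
Your overall strategy---translate the Einstein equation into the unphysical frame and read off constraints order by order---is a legitimate and more elementary route than the paper's, which packages everything through the scale tractor $I_{\widetilde{\sigma}}$ and the operator $E_{ab}=p^{*}(\hat D^{A}I^{B}_{\widetilde{\sigma}})$. For the two basic constraints \eqref{mthconst1} and its normal derivative your argument would go through: the paper obtains them instead from the identity $I_{\widetilde{\sigma}}\cdot\nabla_{a}I_{\widetilde{\sigma}}=\tfrac12\nabla_{a}I_{\widetilde{\sigma}}^{2}$, but the content is the same.

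There is, however, a concrete error in your identification of the fundamental forms, and it propagates. You assign $\mathring{\mathcal{K}}^{(3)}_{ab}$ to a Cotton projection and $\mathring{\mathcal{K}}^{(4)}_{ab}$ to the tangential-normal-normal Weyl component; in the paper (and in \cite{fundforms}) it is the other way round. One computes $\mathring{\mathcal{K}}^{(3)}_{ab}=\delta_{R}E_{ab}\overset{\Sigma}{=}\sqrt{\lambda}\,C^{\top}_{nanb}$ (the electric Weyl tensor, after umbilicity), while $\mathring{\mathcal{K}}^{(4)}_{ab}$ is given by a combination whose principal part is the Cotton projection $A^{\top}_{(a|n|b)}$; on $\Sigma$ for $q\geq 1$ it reduces to $-\lambda A^{\top}_{anb}$. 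So your $q=0$ calculation would produce the wrong object, and your divergence computation for $\mathring{\mathcal{K}}^{(4)}_{ab}$ would be aimed at the wrong curvature tensor. The paper's route for that divergence is to introduce the \emph{rescaled} Weyl tensor $K_{abcd}:=\widetilde{\sigma}^{-1}C_{abcd}$ (well defined since $C_{abcd}\overset{\Sigma}{=}0$ for $q\geq 1$), show $A^{\top}_{anb}\overset{\Sigma}{=}\sqrt{\lambda}\,K^{\top}_{anbn}$ via the tractor curvature commutator, and then apply the Bianchi identity to $K_{abcd}$; the Schouten-divergence manoeuvre you describe does not directly give this.

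Your account of how \eqref{mthconst2} arises is also off. It is not a regularity condition for Weyl components defining $\mathring{\mathcal{K}}^{(4)}_{ab}$. What happens is that the \emph{na\"ive} fourth form $\delta_{R}\circ\mathrm{ID}\,E_{ab}$ equals $-2\lambda\,\mathring{\top}(\mathring{K}^{cd}C_{acbd})$ on the geometric side and hence vanishes identically once $\Sigma$ is umbilic; equating this zero to the matter side for $q=1$ forces $\mathring{\top}(\tau_{ab})\overset{\Sigma}{=}0$, which together with \eqref{mthconst1} yields $\tau_{ab}\overset{\Sigma}{=}-n_{a}n_{b}\tau$. The vanishing of this na\"ive expression is exactly why $\mathring{\mathcal{K}}^{(4)}_{ab}$ must be \emph{redefined} in dimension four (it is the image of a Dirichlet-to-Neumann map rather than a jet of $E_{ab}$), and why it is not determined by local boundary data.
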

It can be shown that the condition for the regularity of the conformal
field equations from \cite{friedrich} is an example of applying
(\ref{mthconst1}) in the $q=0$ case. The constraint (\ref{mthconst2})
has been previously derived with the use of different methods (see
e.g. \cite{a2} and \cite{senovilla}). The fourth fundamental form of
the conformal boundary cannot be determined locally by the
stress-energy tensor, because (as we shall see) it is an image of the
Dirichlet-to-Neumann map for the conformal Einstein field equations
(viewed as boundary value problem) -- in the spirit of
\cite{GrahamDN}.

\begin{remark}
The principal parts of the higher-order conformal fundamental forms
discussed in this paper are given by the electric part of the ambient Weyl
tensor $C_{a c b d} n^c n^d$ and the projected part of the Cotton $
\mathring{\top} \left( A_{a c b} n^c \right)$ and Bach $
\mathring{\top} \left( B_{ab} \right)$ tensors.
\end{remark}

Strictly speaking,  a spacetime $(\widetilde{M}, \widetilde{g}_{ab}
)$, as in Theorem \ref{mth}, should not be called
asymptotically de Sitter in the case when $q=0$. Nevertheless, we include this case here as
it naturally fits in the current application of tractor calculus to
general relativity.

The structure of this article is as follows. In Section \ref{secprel}
we discuss the definitions of asymptotically de Sitter spacetimes and
their conformal extensions together with some basic concepts from the
tractor calculus. The following section is then dedicated to the application of
tractor calculus in the setting of asymptotically de Sitter
spacetimes. Section \ref{fundformsec} contains the derivation of
constraints relating the matter fields to the conformal fundamental
forms of the conformal boundary of asymptotically de Sitter
spacetimes. The final section contains a discussion about a possible
application of the results of this work.

\subsection*{Notation and conventions}
We will work with an $n$-dimensional ($n\geq 3$) conformal manifold
equipped with the equivalence class of metric tensors $\mathbf{c}$ of
signature $\left(1, n-1,0 \right)$, and then restrict our attention to
$n=4$ in Section \ref{fundformsec}. The abstract index notation will
be used throughout the paper with the lower case Latin letters $a,b,c
\dots$ associated with tensors, and upper case ones $A,B,C \dots$ with
tractors.

Any tensor $T_{ab}$ can be decomposed into its symmetric and antisymmetric parts in accordance with the formula
\begin{equation}
T_{ab } = T_{(ab) } + T_{[ab] },
\end{equation}
where 
\begin{equation}
 T_{(ab) } := \frac{1}{2} \left( T_{ab } + T_{ba } \right), \quad  T_{[ab] } := \frac{1}{2} \left( T_{ab} - T_{ba} \right),
\end{equation}
The trace-free part of $T_{ab}$ will be denoted by $\mathring{T}_{ab}$, i.e.
\begin{equation}
\mathring{T}_{ab } := T_{ab } -\frac{1}{n} g_{ab} T, \quad T:= g^{cd} T_{cd}
\end{equation}
for some $g_{ab} \in \mathbf{c}$, whereas the trace-free symmetrized part of $T_{ab}$ by $T_{\left(ab \right)_0}$, i.e.
\begin{equation}
T_{\left(ab \right)_0} := T_{(ab)} - \frac{1}{n} g_{ab} T.
\end{equation}
The convention that we use for the Riemann tensor associated with a metric $g_{ab} \in \mathbf{c}$ is as follows,
\begin{equation} \label{riemdef}
\left[ \nabla_a,  \nabla_b \right] v^c = R_{ab}{}^c{}_d v^d,
\end{equation}
where $\nabla_a$ is the Levi-Civita connection of $g_{ab}$. The following decomposition holds,
\begin{equation} \label{riemdec}
R_{abcd} = C_{abcd} + 2 \left(g_{c[a}P_{b]d} + g_{d[b}P_{a]c}\right),
\end{equation}
where $C_{abcd}$ is the (fully trace-free) Weyl tensor and
\begin{equation}
P_{ab} :=  \frac{1}{n-2} \left( R_{ab} - \frac{R}{2 \left( n-1 \right)} g_{ab}\right)
\end{equation}
is the Schouten tensor. We will use $J$ to denote its trace, i.e. $J := g^{ab} P_{ab}$. Lastly, we have
\begin{equation} \label{bachcotdef}
\begin{aligned}
A_{abc} & := 2 \nabla_{[b}P_{c]a}, \\
B_{ab} & := - \nabla^c A_{abc} + P^{dc} C_{dacb},
\end{aligned}
\end{equation} 
where $A_{abc}$ and $B_{ab}$ are Cotton and Bach tensors, respectively. It should be noted that the Bianchi identities imply
\begin{equation} \label{bianchi}
 \left( n-3 \right) A_{abc}= \nabla^d C_{dabc}.
\end{equation}

\medskip

When working with a hypersurface (i.e. a codimension one embedded
submanifold) $\Sigma$ with normal vector field $n^a$, we shall
identify $T\Sigma$ with the subbundle $n^\perp$ of $TM|_{\Sigma}$
consisting of tangent vectors orthogonal to $n^a$. Similarly, the
hypersurface cotangent bundle $T^*\Sigma$ will be identified with the
annihilator of $n^a$ in $T^*M|_\Sigma$. In this way we use the same
abstract indices for $T\Sigma$ and $T^*\Sigma$ as we use for,
respectively, $TM$ and $T^*M$. With this understood, the quantities intrinsic
to $\Sigma$ will be denoted by a bar. For example,
given a metric $g_{ab}$ on $M$, $\overline{g}_{ab}$ denotes the metric induced on $\Sigma$ (i.e. the first fundamental form) and $\overline{\nabla}_a$ is its Levi-Civita
connection. Note also that with these conventions $ \overline{g}_a{}^c$ (index raised using $g^{bc}$) gives, by a single contraction, the orthogonal projection from $TM$ to $T\Sigma$.   
We will use superscript $\top$ when working with the
orthogonal projections of the ambient tensors to the hypersurface tensor bundles.
e.g. $T^{\top}_{ab}:= \overline{g}_a{}^c \overline{g}_{b}{}^d T_{cd}$.
The abstract index $n$ will denote contraction with the normal vector
$n^a$, e.g.
\begin{equation}
T^{\top}_{a n b n} =  \overline{g}_a{}^c  \overline{g}_{b}{}^d T_{c f d h} n^f n^h,
\end{equation}
and $\mathring{\top}$ will be used when considering the trace-free part of the projection.

 The second fundamental form $K_{ab}$ of $\Sigma$ is defined in terms of its normal vector $n^a$ and can be decomposed as follows,
\begin{equation}
K_{ab} := \left( \nabla_a n_b \right)^{\top}= \mathring{K}_{ab} +\frac{H}{n-1} \overline{g}_{ab}, \quad H:=  \overline{g}^{cd} K_{cd},
\end{equation}
where $\mathring{K}_{ab}$ denotes the traceless part of $K_{ab}$ and $H$ will be called the mean curvature of $\Sigma$.

Lastly, we will use $\overset{\Sigma}{=}$ when an ambient quantity is evaluated on $\Sigma$, e.g.
\begin{equation}
n_a n^a \overset{\Sigma}{=} -1
\end{equation}
indicates that $n^a$ has a negative unit norm on this hypersurface.
\section{Preliminaries} \label{secprel}
Let $(\widetilde{M}, \widetilde{g}_{ab} )$ be a spacetime that satisfies the Einstein field equations with positive cosmological constant $\Lambda$, 
\begin{equation} \label{efes}
\widetilde{R}_{ab} - \frac{1}{2} \widetilde{R} \widetilde{g}_{ab} + \Lambda \widetilde{g}_{ab} = \widetilde{T}_{ab},
\end{equation}
where $\widetilde{T}_{ab}$ is the stress-energy tensor. Central to this paper is the notion of the asymptotically de Sitter spacetime. We will work with the following definition.
\begin{definition} \label{defadS}
A spacetime $(\widetilde{M}, \widetilde{g}_{ab} )$ is asymptotically de Sitter if there exists a manifold $M$, with boundary $\Sigma$ and a metric $\mathtt{g}_{ab}$, such that
\begin{itemize}
\item there is an embedding $\varphi: \widetilde{M} \to M$ such that $\varphi (\widetilde{M}) =  M \setminus \Sigma$,
\item the metric $\mathtt{g}_{ab} \in \mathbf{c} $ is regular on $M$
  and satisfies $\widetilde{g}_{ab} = \Omega^{-2} \mathtt{g}_{ab}$ (on $\widetilde{M}$) for
  some smooth non-negative function $\Omega$ on $M$,
\item $\Omega$ is a defining function of the boundary $\Sigma$, i.e. $\Sigma =\Omega^{-1}(0)$ and $d \Omega $ is nowhere zero on $\Sigma$,
\item the stress-energy tensor $\widetilde{T}_{ab}$ vanishes along $\Sigma$.
\end{itemize}
\end{definition}
The boundary $\Sigma$ is often called the conformal infinity of $(\widetilde{M}, \widetilde{g}_{ab})$.

\subsection*{Tractor Calculus} 
\hfill

Here by a conformal manifold $(M,\cc)$ we mean a smooth manifold of
dimension $n\geq 3$ equipped with an equivalence class $\cc$ of
metrics, where $g_{ab}$, $\widehat{g}_{ab} \in \cc$ means that $\widehat{g}_{ab}=\Theta^2
g_{ab}$ for some smooth positive function $\Theta$. On a general conformal
manifold $(M,\cc)$, there is no distinguished connection on $TM$. But
there is an invariant and canonical connection on a closely related
bundle, namely the conformal tractor connection on the standard
tractor bundle, see \cite{BEG,CGtams}.

Here we review the basic conformal tractor calculus on
pseudo-Riemannian and conformal manifolds. See
\cite{Go-Pet-lap,Curry-G} for more details.
Unless stated otherwise, every calculation will be done with the use of generic $g_{ab} \in
\cc$. Hence, we will omit the superscript $g$ in the objects
determined by this metric, e.g. $\nabla_a$ will be used instead of
$\nabla^g_a$.

 On any manifold $M$ of dimension $n$ the line bundle
$\mathcal{K}:=(\Lambda^{n} TM)^{\otimes 2}$ of volume densities is
canonically oriented and thus one may take oriented roots of it: Given
$w\in \mathbb{R}$ we set
\begin{equation} \label{cdensities}
\ce[w]:=\mathcal{K}^{\frac{w}{2n}} ,
\end{equation}
and refer to this as the bundle of conformal densities. For any vector bundle $\cV$, we write $\cV[w]$ to mean 
$\cV[w]:=\cV\otimes\ce[w]$.
For example, $\ce_{(ab)}[w]$ denotes the symmetric second tensor power of the cotangent bundle tensored with $\ce[w]$, i.e. $S^2T^* M\otimes \ce[w]$ on $M$. When discussing bundles on $\Sigma$ a $\overline{\ce}$ symbol will be used, e.g. $\overline{\ce}{}^{a}$ is the tangent bundle of this hypersurface.

On a conformal structure there is a canonical section $\bg_{ab}\in
\Gamma(\ce_{(ab)})[2]$. This has the property that for each positive
section $\si\in \Gamma(\ce_+ [1])$ (called a {\em scale})
$g_{ab}:=\si^{-2}\bg_{ab}$ is a metric in $\cc$. Moreover, the Levi-Civita connection of $g_{ab}$ preserves $\si$ and therefore $\bg_{ab}$. Thus it makes sense to use the conformal metric to raise and lower indices, even when we have nominated a metric $g_{ab} \in \cc$ to split the tractor bundles and determine a Levi-Civita connection. It turns out that this simplifies many computations, and so (following the mentioned references) in this section we will do that without further mention. 

Considering Taylor series for sections of $\ce[1]$ one recovers the jet exact sequence at 2-jets,
\begin{equation}\label{J2}
0\to \ce_{(ab)}[1]\stackrel{\iota}{\to} J^2\ce[1]\to J^1\ce[1]\to 0 .
\end{equation}
Then given the conformal structure we have the orthogonal decomposition in trace-free and trace parts 
\begin{equation}
\ce_{ab}[1]= \ce_{(ab)_0}[1]\oplus \bg_{ab}\cdot \ce[-1] . 
\end{equation}
Thus we can canonically quotient $J^2\ce[1]$ by the image of
$\ce_{(ab)_0}[1]$ under $\iota$ (in (\ref{J2})) to form the bundle
$\cT^*$, called the conformal cotractor bundle.

Given a choice of metric $g_{ab} \in\cc$, the formula
\begin{equation}\label{thomas_D}
  \sigma \mapsto \frac{1}{n} {[D_A \sigma]}_g := 
  \begin{pmatrix}
    \sigma \\ 
    \nabla_a \sigma \\ 
    - \frac{1}{n} \left( \Delta + J \right) \sigma 
  \end{pmatrix}
\end{equation}
(where $\Delta $ is the Laplacian $\nabla^a\nabla_a$)
gives a second-order differential operator on $\ce[1]$ which is a linear map $J^2 \ce[1] \to \ce[1] \oplus \ce_a [1] \oplus \ce[-1]$ that clearly factors through $\cT^*$ and so determines an isomorphism
\begin{equation}\label{T_isom}
  \cT^* \stackrel{\sim}{\longrightarrow} {[\cT^*]}_g = \ce[1] \oplus \ce_a [1] \oplus \ce[-1].
\end{equation}
The tractor defined in (\ref{thomas_D}) will be called the scale tractor corresponding to the scale $\sigma$ and denoted by $I_{\sigma}$, i.e.
\begin{equation}
I_{\sigma } := \frac{1}{n} D_A \sigma.
\end{equation}

In subsequent discussions, we will use~\eqref{T_isom} to split the
tractor bundles without further comment.  Thus, given $g_{ab} \in \cc$, an
element $V^A$ of $\ce^A$ may be represented by a triple
$(\si,\mu_a,\rho)$, or equivalently by
\begin{equation}\label{Vsplit}
  V_A=\si Y_A+\mu_a Z_A{}^a+\rho X_A.
\end{equation}
The last display defines the algebraic splitting operators
$Y:\ce[1]\to \cT^*$ and $Z :T^*M[1]\to \cT^*$ (determined by the
choice $g_{ab} \in \cc$) which may be viewed as sections $Y_A\in
\Gamma(\ce_A[-1])$ and $Z_A{}^a\in \Gamma(\ce_A{}^a [-1])$.  We call
these sections $X_A, Y_A$ and $Z_A{}^a$ \emph{tractor projectors}.

It is straightforward to verify that the equation
\begin{equation}\label{AE}
\nabla_{(a}\nabla_{b)_0} \si+ P_{(ab)_0}\si =0
\end{equation}
on conformal densities $\si\in \Gamma(\ce[1])$ is conformally
invariant. As it is overdetermined, solutions may not exist and indeed 
it is well known that positive ones are equivalent to
vacuum-Einstein metrics in the conformal class: $\si \in
\Gamma(\ce_+[1])$ solves (\ref{AE}) is equivalent to
$P^{\hat{g}}_{(ab)_0}=0$, where $\hat{g}_{ab} =\si^{-2}\bg_{ab}$. More
generally, non-trivial solutions are non-vanishing on an open dense set, on
which they determine a vacuum-Einstein metric
\cite{G-almost,Curry-G}. Thus (\ref{AE}) is sometimes called the {\em almost Einstein equation}.

Given a metric $g_{ab} \in \cc$, the tractor connection is given by the formula
 \begin{equation}\label{tr-conn}
 \nabla_a^\cT  \begin{pmatrix}
    \sigma \\ 
    \mu_b \\ 
    \rho
  \end{pmatrix} :=\begin{pmatrix}
   \nabla_a \sigma-\mu_a \\ 
    \nabla_a\mu_b+P_{ab}\si+\bg_{ab}\rho \\ 
    \nabla_a\rho- P_{ac}\mu^c
  \end{pmatrix}
\end{equation}
and the equation of parallel transport is equivalent to a
prolongation of the almost Einstein equation (\ref{AE}), see
\cite{BEG,Curry-G} (and Section \ref{secaem} below). Thus, in particular, solutions
$V_A$ of $\nabla_a^\cT V_A=0$ are equivalent to solutions $\si$ of
(\ref{AE}) with the explicit relations:
$$
\si=X^AV_A \qquad\mbox{and}\qquad V =I_{\si} .
$$
The tractor bundle is also equipped with a tractor metric $h_{AB} \in \Gamma \left(\ce_{(AB)} \right)$, defined using the mapping
\begin{equation}
[V_A]_g = \begin{pmatrix}
    \sigma \\ 
    \mu_a \\ 
    \rho
  \end{pmatrix} \to \mu_a \mu^a +2  \si \rho =: h \left(V,V \right) 
\end{equation}
combined with the polarization identity. It can be checked that the tractor metric is preserved by $\nabla_a^\cT$, i.e. $ \nabla_a^\cT h_{AB} =0$.

The curvature of the tractor connection $\kappa_{abCD}$ can be recovered with the use of the following relation,
\begin{equation} \label{trc1}
2 \nabla_{[a}^\cT \nabla_{b]}^\cT V^C = \kappa_{ab}{}^{C}{}_D V^D \quad \mathrm{for \ all} \quad V^C \in \Gamma \left(\ce^A \right),
\end{equation}
and can be written in terms of tractor projectors as
\begin{equation} \label{trc2}
\kappa_{abCD} = A_{cab} Z_{C}{}^c X_D - A_{cab} X_C Z_D{}^c + C_{abcd} Z_{C}{}^c Z_D{}^d.
\end{equation}

\subsection*{The scale singularity set and the normal tractor}
\hfill

Since metrics $g_{ab} \in \cc$ correspond to section $\si\in
\Gamma(\ce[1])$ via $g_{ab} =\si^{-2}\bg_{ab}$, points where $\si$
vanishes are conformal singularities of $g_{ab}$. It is elementary to
verify that if the ``length squared'' of the scale tractor, meaning
$I_{\si}^2:= h \left( I_{\si}, I_{\si} \right)$, is nowhere zero then
the zero locus $\Sigma = \mathcal{Z}(\si)$ of $\si$ is a smoothly embedded
separating hypersurface.

Alternatively, if we are considering a manifold with boundary, it can be the case that $\Sigma =\mathcal{Z}(\si)$ is the
boundary. Along such $\mathcal{Z} (\si)$ there is a conformally
invariant tractor analogue of the normal vector called the normal tractor \cite{BEG} -- a section $N_A$ of $\cT|_\Sigma$ that is given in a metric $g_{ab} \in \cc$ by the formula
\begin{equation} \label{normaltr}
N_A \stackrel{g}{=}\left(\begin{array}{c}0\\
n_a\\
-\frac{H}{n-1}\end{array}\right),
\end{equation}
where $n_a \in \Gamma \left( \mathcal{E}_a[1] \right)$ and $H \in \Gamma \left( \mathcal{E}[-1] \right)$ are the densities corresponding to the normal vector and the mean curvature of  $\mathcal{Z} (\si)$, which will be defined in Section \ref{secaem} for asymptotically de Sitter spacetimes.

If $I_{\si}^2 = \pm 1$ along $\Sigma$, then a rather nice feature is captured by the following proposition \cite{G-almost,Curry-G}:
\begin{proposition}\label{ascN}
Let $(M,\cc)$ be a pseudo-Riemannian conformal structure and suppose that   
a scale tractor $I_{\si}$ has a scale singularity set $\Sigma=\mathcal{Z}(\si)\neq \emptyset$ and
$I_{\si}^2=\pm 1 + \si^2 f$ for some smooth (weight $-2$) density $f$. Then we
have $N=I_{\si}|_\Sigma$, where $N_A$ is the normal tractor. 
\end{proposition}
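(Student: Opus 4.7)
\emph{Proof plan.} Fix an arbitrary metric $g \in \cc$. Then by \eqref{thomas_D}, $I_\sigma$ has slot expression $(\si,\,\nabla_a\si,\,-\tfrac{1}{n}(\Delta+J)\si)$, while by \eqref{normaltr} the normal tractor reads $(0,\,n_a,\,-H/(n-1))$ along $\Sigma$. The plan is to verify slot by slot that these agree on $\Sigma = \mathcal{Z}(\si)$.

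The top slot is immediate since $\si|_\Sigma = 0$. For the middle slot, I would expand the tractor norm via the slot expression to get $I_\sigma^2 = \nabla^a\si\,\nabla_a\si - \tfrac{2\si}{n}(\Delta+J)\si$, and restrict the hypothesis $I_\sigma^2 = \pm 1 + \si^2 f$ to $\Sigma$. The $\si^2 f$ term drops and I read off $\nabla^a\si\,\nabla_a\si|_\Sigma = \pm 1$; in particular $d\si$ is nowhere zero on $\Sigma$, so $\Sigma$ is genuinely a smooth embedded hypersurface, and $\nabla_a\si|_\Sigma$ is the unit (co)normal density $n_a$ appearing in \eqref{normaltr}.

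The substantive step is the bottom slot. I would differentiate the hypothesis $I_\sigma^2 = \pm 1 + \si^2 f$, whose right-hand side vanishes to first order along $\Sigma$, and equate it with the ambient derivative of the slot-expanded $I_\sigma^2$. All $\si$-prefixed terms drop on $\Sigma$, and using $\nabla_b\si|_\Sigma = n_b$ leaves the pointwise identity
\[
n^b \nabla_a \nabla_b \si\big|_\Sigma = \tfrac{n_a}{n}(\Delta+J)\si\big|_\Sigma.
\]
Contracting with $n^a$ extracts the normal-normal Hessian; then the orthogonal split $g^{cd} = \overline{g}^{cd} \pm n^c n^d$ isolates the tangential trace $\overline{g}^{cd}\nabla_c\nabla_d\si|_\Sigma$, which equals the mean curvature $H$ in view of $K_{ab} = \overline{g}_a{}^c \overline{g}_b{}^d \nabla_c\nabla_d\si|_\Sigma$ (applied with the unit normal $\nabla_b\si|_\Sigma$). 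Solving the resulting scalar equation, and using $J\si|_\Sigma = 0$, yields $\Delta\si|_\Sigma = \tfrac{n}{n-1}H$, hence $-\tfrac{1}{n}(\Delta+J)\si|_\Sigma = -H/(n-1)$, matching the bottom slot of $N_A$.

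The main obstacle I anticipate is sign bookkeeping across signatures, but the pivotal contraction involves $n_a n^a$ only through its square, so a single calculation handles both the spacelike and timelike cases uniformly, without any case distinction.
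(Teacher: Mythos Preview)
Your argument is correct. The paper does not supply its own proof of this proposition; it is stated with citation to \cite{G-almost,Curry-G}. So there is nothing to compare against directly.

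That said, the paper later carries out an equivalent computation in the asymptotically de Sitter setting (Section~\ref{secaem}, equations \eqref{accel}--\eqref{meancurvdens} and the display for $I_{\widetilde{\sigma}}|_\Sigma$ that follows). There the organization is slightly different: rather than differentiating $I_\sigma^2$ and contracting, the paper uses the gradient property $\nabla_{[a}n_{b]}=0$ to write the acceleration as $\nabla_n n_a = \tfrac{1}{2}\nabla_a(n_bn^b)$, then reads off $\nabla_n n_a|_\Sigma$ from the known norm expansion, and finally splits $\nabla^a n_a$ into its tangential trace $H$ and normal part. Since $n_a \propto \nabla_a\sigma$ and $n_bn^b$ agrees with $I_\sigma^2$ up to $O(\sigma)$ terms, this is the same mechanism as yours, just packaged via the acceleration instead of via $\nabla_a I_\sigma^2$. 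Your route is marginally cleaner in that it never introduces an extended normal field or invokes $\nabla_{[a}n_{b]}=0$ explicitly; the paper's route has the minor advantage that the intermediate quantity $\nabla_n n_a$ is geometrically transparent and is reused elsewhere.

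One small point worth tightening: when you write $K_{ab} = \overline{g}_a{}^c\overline{g}_b{}^d\nabla_c\nabla_d\sigma|_\Sigma$, you are implicitly using $\nabla_b\sigma$ as your extension of the unit normal off $\Sigma$, which is not unit-length away from $\Sigma$. This is harmless because the tangential projection of $\nabla_a n_b$ along $\Sigma$ is independent of the extension (any rescaling contributes a term proportional to $n_b$, killed by the projector), but it would be worth a one-line remark.
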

In particular, this holds if the norm of the scale tractor is equal to a constant on the boundary $\Sigma$ of
asymptotically de Sitter spacetimes. This will be explored in Section
\ref{secaem}.

Along a hypersurface (or boundary) $\Sigma$, the normal tractor can
be used to introduce the tractor projection operator,
\begin{equation}
\Pi_A{}^B = \delta_A{}^B \mp N_A N^B,
\end{equation}
with signs reflecting the cases $N_A N^A = \pm 1$. The image of
$\Pi_A{}^B$ is isomorphic to the hypersurface tractor bundle
$\overline{\cT}$, where the isomorphism is given by \cite{G-almost,Curry-G}
\begin{equation} \label{hypspl}
\begin{split}
\overline{X}^A & = X^A, \\
\overline{Y}^A & = Y^A  \mp Z^{Aa} \frac{H}{n-1} n_a  \pm  \frac{H^2}{2 \left(n-1 \right)^2} X^A, \\
\overline{Z}^{Aa} & =  \overline{g}^a{}_b Z^{A b}.
\end{split}
\end{equation}
\subsection*{Elements of tractor calculus} 
\hfill

We will use the symbol $\ce^{\Phi}[w]$ to denote any tractor bundle of weight $w$.
The operator $D_A$ in (\ref{thomas_D}) generalizes to a conformally
invariant differential operator on sections of $\ce^\Phi[w]$ \cite{BEG,Go-Pet-lap}
\begin{equation}
D_A: \ce^\Phi[w]\to \ce_A\otimes \ce^\Phi[w-1],
\end{equation}
and is given by 
\begin{equation}\label{tD}
V \mapsto [D_A V]_g:= \left(\begin{array}{c}
(n+2w-2)w V \\
(n+2w-2) \nabla_a V \\
- (\Delta V + w J V)
\end{array}\right)
  \end{equation}
in a metric $g_{ab} \in \cc$. On the right-hand side of the display
the $\nabla_a$ is the Levi-Civita connection coupled with the tractor
connection (including in the Laplacian).  This is usually called the
tractor-D, or {\em Thomas-D}, operator. It is often useful to use its
rescaled version, denoted here by $\widehat{D}_A$, i.e.
\begin{equation}\label{tDhat}
V \mapsto [\widehat{D}_A V]_g:= \left(\begin{array}{c}
w V \\
 \nabla_a V \\
- \frac{1}{(n+2w-2)} (\Delta V + w J V)
\end{array}\right),
  \end{equation}
which is defined for $w \neq 1- \frac{n}{2}$.  The Thomas-D operator may be combined with the scale tractor to produce a canonical degenerate Laplacian type differential operator \cite{G-DN}, namely 
\begin{equation} \label{deglap}
\ID:= I_{\si}^AD_A.
\end{equation}
This acts on any weighted tractor bundle, preserving its tensor type
but lowering the weight,
\begin{equation}
\ID : \ce^\Phi[w]\to  \ce^\Phi[w-1].
\end{equation}
It can be expanded in terms of a metric $g_{ab} \in \cc$ to yield 
\begin{equation}\label{degLa}
\ID \stackrel{g}{=} -\si \Delta
+(n+2w-2)[(\nabla^a \si)\nabla_a - \frac{w}{n} (\Delta \si)] -\frac{2w}{n}(n+w-1)\si J ~
\end{equation} on $\ce^\Phi [w]$.
 Now if we calculate in the metric $g_{ab}=\si^{-2}\bg_{ab}$, away from  the
 zero locus of $\si$, and trivialize the densities accordingly,  then $\si$ is represented by 1 in the trivialization, and we have
\begin{equation}
 \ID \stackrel{g}{=}-  \Big(\Delta + \frac{2w(n+w-1)}{n} J \Big).
\end{equation}
On the other hand, looking again to~(\ref{degLa}), we see that $\ID$
degenerates along the conformal infinity $\Sigma = \mathcal{Z}(\si)$
(assumed non-empty), and there the operator is first order. In
particular, if the structure is asymptotically almost scalar constant
in the sense that $I_{\si}^2=\pm 1+ \si^2 f$ for some smooth (weight
$-2$) density $f$, then along $\Sigma$
\begin{equation}
\ID \overset{\Sigma}{=} (n+2w-2) \delta_R,
\end{equation}
where $\delta_R$  is the conformal Robin operator,
\begin{equation} \label{delrob}
\delta_R \stackrel{g}{=}  \nabla_n - w \frac{H}{n-1} ,
\end{equation}
of~\cite{Cherrier,BrGo-non} (twisted with the tractor connection).

Given a $t_{ab} \in \Gamma \left( \ce_{(ab)_0} [w]\right)$ with $w \neq 2-n, \ 3-n$ we can define a map $p$ which inserts $t_{ab}$ into a symmetric trace-free tractor $T_{AB}$,
\begin{equation}
p: \Gamma \left( \ce_{\left(ab\right)_0} \left[w \right]\right) \to \Gamma  \left( \ce_{\left(AB\right)_0} \left[w-2\right]\right), \quad w \neq 2-n, \ 3-n,
\end{equation}
which is given by
\begin{equation}
p \left( t_{ab} \right) := Z_A{}^a Z_B{}^b t_{ab} -  \frac{ 2 \nabla \cdot t_c }{n+w-2} X_{(A} Z_{B)}{}^c + \frac{\nabla \cdot \nabla \cdot t + \left( n+w-2 \right) P \cdot t}{\left(n+w -2 \right) \left(n+w-3 \right)} X_A X_B.
\end{equation}
We have
\begin{equation}
D^A T_{AB} = 0 = X^A T_{AB}.
\end{equation}
It can be seen that a general property of a symmetric tractor $T_{AB....} \in \Gamma \left( \ce_{\left(AB...\right)} \left[w\right] \right)$ with $X^A T_{AB...} = 0$ is that the tensor $Z^A{}_a Z^B{}_b ... T_{AB...}$ is conformally invariant. Therefore, the notion of extraction operator $p^*$ can be defined for such tractors, i.e. let
\begin{equation}
p^*: \Gamma \left( \ce^{\mathring{X}}_{\left(AB...\right)} \left[w\right] \right) \to \Gamma \left( \ce_{\left(ab...\right)} \left[w+v \right] \right),
\end{equation}
where $\ce^{\mathring{X}}$ denotes the tractor bundle whose sections vanish when contracted with $X^A$ and $v$ is the valence of $T_{AB...}$. In particular, 
\begin{equation}
p^* \left( T_{AB...} \right) := Z^A{}_a Z^B{}_b \  ... \  T_{AB...}, \quad T_{AB...} \in \Gamma \left( \ce^{\mathring{X}}_{\left(AB...\right)} \right).
\end{equation}
The crucial property of the operators $p$ and $p^*$ is that
\begin{equation}
p^* \circ p = \mathrm{Id},
\end{equation}
i.e. the insertion operator is the right-inverse of the extraction operator. (Such operators are often called differential splitting operators.)

In order to obtain a section of the $\ce^{\mathring{X}}$ bundle, from section of a generic tractor bundle, a tractor {\em projection} differential operator $r$ can be used.

We will restrict ourselves to traceless tractors of valence 2, i.e.
\begin{equation}
\begin{split}
r \left( T^{AB} \right)   := & 
T^{AB}-
\frac{2}w \widehat{D}^{(A} \big(X_C T^{|C|B)_0 }\big)
+\tfrac1{w(w+1)} \widehat{D}^{(A} \widehat{D}^{B)_0} \big(X_C X_D T^{CD}\big)
\\
& -\tfrac{8}{wn(n+2w+2)}
X^{(A} \widehat{D}^{B)_0} \big(\widehat{D}_C (X_D T^{CD})\big),
\end{split}
\end{equation}
for $w \neq 0, -1, - \frac{n}{2}, -1 - \frac{n}{2}, -2 - \frac{n}{2}$ and $T_{AB} \in \Gamma \left( \ce_{\left(AB\right)_0} \left[w\right] \right)$.

\begin{remark}
If $n \geq4$ then the injection, extraction and projection 
operators have their hypersurface analogues
$\overline{p}$, $\overline{p}^*$, and $\overline{r}$ which can be
defined with the use of hypersurface tractor connection, the
corresponding hypersurface tractor operators, and hypersurface tractor
projectors (\ref{hypspl}). The notion of hypersurface operators can
still be introduced in 3 dimensions with an additional M{\"o}bius
structures over the conformal boundary.
\end{remark}

\section{The almost-Einstein-matter equation and its consequences} \label{aes} \label{secaem}
Let $(\widetilde{M}, \widetilde{g}_{ab})$ be an asymptotically de Sitter spacetime. After taking a divergence of the Einstein field equations (\ref{efes}) one arrives at the matter continuity equation,
\begin{equation} \label{divt}
\widetilde{\nabla}^a \widetilde{T}_{ab} = 0.
\end{equation}
 We can utilize a decomposition of the Ricci tensor $\widetilde{R}_{ab}$ into the Schouten tensor $\widetilde{P}_{ab}$ and its trace $\widetilde{J}$,
\begin{equation}
\widetilde{R}_{ab} = (n-2) \widetilde{P}_{ab} + \widetilde{J} \widetilde{g}_{ab},
\end{equation}
to obtain a trace-free part of the Einstein field equations (\ref{efes}),
\begin{equation} \label{efes2}
  \widetilde{P}_{ab} - \frac{1}{n} \widetilde{J} \widetilde{g}_{ab} = \frac{1}{n-2} \left( \widetilde{T}_{ab} - \frac{1}{n} \widetilde{T} \widetilde{g}_{ab} \right).
\end{equation}
Suppose that
 \begin{equation}
 \widetilde{g}_{ab} = f^2 g_{ab},
 \end{equation}
 where $f$ is a positive smooth function, i.e.\ $g_{ab}$ is in the conformal class of $\widetilde{g}_{ab}$. Then the transformation rule for the Schouten tensor reads
\begin{equation} \label{schout_tr}
\widetilde{P}_{ab} = P_{ab} - \nabla_a \Upsilon_b + \Upsilon_a \Upsilon_b - \frac{1}{2} g_{ab} \Upsilon^2, 
\end{equation}
where $\Upsilon_a = \nabla_a \log f$. We will now use this to relate (\ref{efes2}) with the almost Einstein equation (\ref{AE}).
Let $\si$ and $\widetilde{\si}$ be scales corresponding to $g_{ab}$ and $\widetilde{g}_{ab}$ respectively, i.e.
\begin{equation}
\widetilde{\sigma}^2 \widetilde{g}_{ab} = \mathbf{g}_{ab} = \sigma^2 g_{ab},
\end{equation}
where $\textbf{g}_{ab}$ is the conformal metric. When working in the scale $\sigma$, the density $\widetilde{\sigma}$ is determined by $f^{-1}$ and 
\begin{equation}
\nabla_a \nabla_b \widetilde{\sigma} = \widetilde{\sigma} \left(- \nabla_a \Upsilon_b + \Upsilon_a \Upsilon_b \right).
\end{equation}
The Einstein field equations (\ref{efes2}) can now be written as 
\begin{equation} \label{efes3}
\nabla_a \nabla_b \widetilde{\sigma} + \widetilde{\sigma} P_{ab} - \frac{1}{n} \mathbf{g}_{ab} \left( \Delta \widetilde{\sigma} + \widetilde{\sigma} J \right) = \frac{\widetilde{\sigma}}{n-2} \mathring{\widetilde{T}}_{ab},
\end{equation}
where
\begin{equation}
 \mathring{\widetilde{T}}_{ab} := \widetilde{T}_{ab} - \frac{1}{n} \widetilde{g}_{ab} \widetilde{T}
\end{equation}
is the traceless part of the physical stress-energy tensor.
\begin{definition}
The scale $\widetilde{\si} $ will be called the almost-Einstein-matter scale.
\end{definition}
Let $\tau_{ab} \in \mathcal{E}_{ab} [-q]$ be the conformally weighted
tensor, corresponding to $\widetilde{T}_{ab}$, defined by
\begin{equation}
\tau_{ab} := \widetilde{\sigma}^{-q} \widetilde{T}_{ab},
\end{equation}
where we assume that $q \in \{0,1,2 \}$. In order to attribute a
physical meaning to the parameter $q$ consider a scale $\eta$
corresponding to the regular metric $\mathtt{g}_{ab}$ which defines
the conformal extension of the spacetime $( \widetilde{M},
\widetilde{g}_{ab} )$ (cf. Definition \ref{defadS}). When
working in the scale $\eta$, the almost-Einstein-matter scale
$\widetilde{\sigma}$ is characterized by the defining function of the
conformal boundary $\Omega$, i.e. $\widetilde{\sigma} = \Omega \eta$,
so
\begin{equation} \label{tphysunphys}
\widetilde{T}_{ab} = \Omega^q \mathtt{T}_{ab},
\end{equation}
where $\mathtt{T}_{ab}$ is the {\em unphysical stress-energy
  tensor} ($\mathtt{T}_{ab}$ and $\Omega$ are, respectively,  $\tau_{ab}$ and $\tilde{\sigma}$ in the scale $\eta$).
Hence, if $\mathtt{T}_{ab}$ is regular
everywhere, then $q$ characterizes the decay of the stress-energy
tensor $\widetilde{T}_{ab}$ when one approaches conformal infinity
of the asymptotically de Sitter spacetime. Based on this observation
the case $q=0$ should be excluded from the analysis, as such matter
fields do not vanish on the conformal boundary. Nevertheless, it fits naturally in the tractor calculus approach presented here, so we
include it in the computations.

The traceless part of $\widetilde{T}_{ab}$ can be related to the traceless part of $\tau_{ab}$ in the following way,
\begin{equation}
\mathring{\widetilde{T}}_{ab} =  \widetilde{\sigma}^q \left(  \tau_{ab} - \frac{1}{n} \mathbf{g}_{ab} \tau \right) =  \widetilde{\sigma}^q \mathring{\tau}_{ab},
\end{equation}
where $\tau = \mathbf{g}^{cd} \tau_{cd} \in \mathcal{E} [-q -2]$ is a density corresponding to the trace $\widetilde{T}$. Equation (\ref{efes3}) now reads
\begin{equation} \label{aleinm}
\nabla_a \nabla_b \widetilde{\sigma} + \widetilde{\sigma} P_{ab} - \frac{1}{n} \mathbf{g}_{ab} \left( \Delta \widetilde{\sigma} + \widetilde{\sigma} J \right) = \frac{\widetilde{\sigma}^{q+1}}{n-2} \mathring{\tau}_{ab},
\end{equation}
and will be called the almost-Einstein-matter equation.

\subsection*{Prolongation of the almost-Einstein-matter equation} 
\hfill

Let
\begin{equation}
\rho := - \frac{1}{n} \left( \Delta \widetilde{\sigma} + \widetilde{\sigma} J \right).
\end{equation}
By taking two different contractions of the covariant derivative of (\ref{aleinm}) we obtain
\begin{equation}
\begin{split}
& \Delta \nabla_a \widetilde{\sigma} + \widetilde{\sigma} \nabla_c P_a{}^c + P_a{}^b \nabla_b \widetilde{\sigma} + \nabla_a \rho = \frac{q+1}{n-2} \widetilde{\sigma}^q \mathring{\tau}_a{}^b \nabla_b \widetilde{\sigma}  + \frac{\widetilde{\sigma}^{q+1}}{n-2} \nabla_b \mathring{\tau}_a{}^b, \\
& \nabla_a \Delta \widetilde{\sigma}  + \widetilde{\sigma} \nabla_a J + J \nabla_a \widetilde{\sigma} + n \nabla_a \rho = 0.
\end{split}
\end{equation}
After taking a difference, using the contracted Bianchi identity,
\begin{equation}
\nabla_b P_a{}^b = \nabla_a J,
\end{equation}
and expressing the commutator $[\nabla_c, \Delta]$ in terms of the Ricci tensor one gets
\begin{equation} \label{prol1}
\nabla_a \rho - P_a{}^b  \nabla_b \widetilde{\sigma} = - \frac{\widetilde{\sigma}^q}{\left(n-1 \right) \left(n-2 \right)}\left(  \left(q+1\right)  \mathring{\tau}_a{}^b \nabla_b \widetilde{\sigma}  + \widetilde{\sigma} \nabla_b \mathring{\tau}_a{}^b \right).
\end{equation}
Hence, the second-order almost-Einstein-matter equation (\ref{aleinm}) is equivalent to the first-order system of three equations,
\begin{equation} \label{prol2}
\begin{split}
& \nabla_a \widetilde{\sigma} - \mu_a =0, \\ & \nabla_a \mu_b +
  \widetilde{\sigma} P_{ab} + \rho \mathbf{g}_{ab} =
  \frac{\widetilde{\sigma}^{q+1}}{n-2} \mathring{\tau}_{ab}, \\
  &
  \nabla_a \rho - P_a{}^b \mu_b = -
  \frac{\widetilde{\sigma}^q}{\left(n-1 \right) \left(n-2
    \right)}\left( \left(q+1\right) \mathring{\tau}_a{}^b \mu_b +
  \widetilde{\sigma} \nabla_b \mathring{\tau}_a{}^b \right),
\end{split}
\end{equation}
in three variables $\widetilde{\sigma}$, $\mu_a$ and $\rho$ (compare
with the definition (\ref{tr-conn}) of the tractor
connection). Due to the presence of matter fields, the r.h.s. of
the above is non-zero, which has direct consequences for the
derivative of the almost-Einstein-matter scale tractor
$I_{\widetilde{\si}}$. Unlike in the standard picture presented in
Section \ref{secprel}, $I_{\widetilde{\si}}$ will no longer be
parallel with respect to the tractor connection. This fact is explored
in more detail below.

\subsection*{The almost-Einstein-matter scale tractor} 
\hfill

Let
\begin{equation}
I_{\widetilde{\sigma}}:= \left(\begin{array}{c}
\widetilde{\sigma} \\
 \nabla_b \widetilde{\sigma} \\
-  \frac{1}{n}(\Delta \widetilde{\sigma} + J \widetilde{\sigma})
\end{array}\right) .
\end{equation}
This is 
  the $\widetilde{\sigma}$-scale tractor. In the scale $\widetilde{\sigma}$
 \begin{equation} \label{i2}
I_{\widetilde{\sigma}}^2 = - \frac{\widetilde{R}}{n \left( n-1 \right)}  = -  \lambda+ \frac{2 \widetilde{\sigma}^{q+2} \tau }{n \left( n-1 \right) \left( n-2 \right) },
 \end{equation}
 where 
 \begin{equation}
\lambda :=  \frac{2  \Lambda }{ \left( n-1 \right) \left( n-2 \right) } .
 \end{equation}
If the matter fields are present, then the scale tractor will no longer be parallel, i.e.
\begin{equation} \label{deri}
\nabla_a I_{\widetilde{\sigma}}  = \frac{\widetilde{\sigma}^{q}}{ \left( n-1 \right) \left( n-2 \right)}   \left(\begin{array}{c}
0 \\
\widetilde{\sigma} \left( n-1 \right) \mathring{\tau}_{ab} \\
- \left(q+1\right)  \mathring{\tau}_a{}^b \nabla_b \widetilde{\sigma}  - \widetilde{\sigma} \nabla_b \mathring{\tau}_a{}^b
\end{array}\right),
\end{equation}
which is a consequence of the almost-Einstein-matter equation (\ref{aleinm}) and its prolongation (\ref{prol2}). Calculating the conformal transformation
\begin{equation}\label{c1}
 \nabla^{\widetilde{g} }_b \mathring{\tau}_a{}^b =  \nabla^g_b \mathring{\tau}_a{}^b + \left( n-2 - q  \right) \Upsilon^b  \mathring{\tau}_{ab},
 \end{equation}
also verifies that the r.h.s.\ of (\ref{deri}) transforms as a
tractor. Moreover, a direct calculation yields
\begin{equation} \label{checki}
\nabla^{\widetilde{g} }_a I_{\widetilde{\sigma}}^2 = 2
I_{\widetilde{\sigma}} \cdot \nabla^{\widetilde{g} }_a
I_{\widetilde{\sigma}} = - \frac{2 \widetilde{\sigma}^{q+2}}{ \left(
  n-1 \right) \left( n-2 \right)}  \nabla^{\widetilde{g} }_b
\mathring{\tau}_a{}^b,
\end{equation}
where (\ref{deri}) and (\ref{c1})  have been used. The continuity equation (\ref{divt}) implies
\begin{equation}
0 = \widetilde{\sigma}^{q+2} \left( \nabla^{\widetilde{g} }_b \mathring{\tau}_a{}^b + \frac{1}{n} \nabla^{\widetilde{g}}_a \tau \right),
\end{equation}
so outside of a zero locus of $\widetilde{\sigma}$ equation (\ref{checki}) agrees with the derivative of (\ref{i2}).

\subsection*{Trace-free second fundamental form $\mathring{K}_{ab}$ of $\Sigma$}
\hfill

The conformal infinity $\Sigma$ of the asymptotically de Sitter
spacetime is an embedded hypersurface of its conformal extension $(M,
\mathtt{g}_{ab})$. As the almost-Einstein-matter scale is a defining
density of $\Sigma$, there is a natural notion of a normal vector
associated with it. It can be used to construct the first two conformal fundamental forms of $\Sigma$.

Let
\begin{equation}
n_a := \frac{1}{\sqrt{\lambda}} \nabla_a \widetilde{\sigma}. 
\end{equation}
It can be seen that $n_a$ is a conformal density of weight $1$, i.e. $n_a \in \Gamma (\mathcal{E}_a[1])$. According to (\ref{i2}), the norm of this vector is as follows,
\begin{equation} \label{nnorm}
n_a n^a = -1 + \frac{2}{ n \lambda }  \widetilde{\sigma} \left( \Delta + J  \right) \widetilde{\sigma} + \mathcal{O} \left( \widetilde{\sigma}^{q+2}\right).
\end{equation}
The first conformal fundamental form of $\Sigma$, its induced conformal metric $ \overline{\mathbf{g}}_{ab}$, can now be defined as
\begin{equation}
  \overline{\mathbf{g}}_{ab}
  := \mathbf{g}_{ab} + \frac{n_a n_b}{1 + \frac{2}{\lambda} \rho \widetilde{\sigma}  }+ \mathcal{O} \left( \widetilde{\sigma}^{q+2}\right)
  \overset{\Sigma}{=}  \mathbf{g}_{ab} + n_a n_b.
\end{equation}

The extrinsic curvature of the conformal boundary (i.e. the second fundamental form) is given by
\begin{equation}
K_{ab} := \overline{\mathbf{g}}_a{}^c \nabla_c n_b |_{\Sigma},
\end{equation}
i.e.
\begin{equation}
\nabla_a n_b\overset{\Sigma}{=} K_{ab} - n_a \nabla_n n_b \bigg|_{\Sigma}.
\end{equation}
However, since 
\begin{equation}
\nabla_{[a} n_{b]} = \frac{1}{\sqrt{\lambda}} \nabla_{[a} \nabla_{b]} \widetilde{\sigma} = 0,
\end{equation}
the $\nabla_n n_a$ can be expressed as
\begin{equation} \label{accel}
\begin{split}
\nabla_n n_a  & = \frac{1}{2} \nabla_a \left( n^b n_b  \right) = -  \frac{1}{\sqrt{\lambda}} n_a   \rho  +   \mathcal{O} \left(  \widetilde{\sigma} \right) =\frac{1}{ n \sqrt{\lambda}} n_a  \Delta \widetilde{\sigma} +   \mathcal{O} \left(  \widetilde{\sigma} \right).
\end{split}
\end{equation}
We can use (\ref{accel}) to compute the mean curvature $H$ of $\Sigma$,
\begin{equation} \label{meancurvdens}
\begin{split}
H & := \overline{\mathbf{g}}^{ab} K_{ab} \overset{\Sigma}{=} \overline{\mathbf{g}}^{ab} \nabla_a n_b \\
&  = \left( \frac{1}{\sqrt{\lambda}} \Delta \widetilde{\sigma} + \frac{1}{2} \nabla_n \left( n_a n^a \right) \right) \bigg|_{\Sigma} =  \frac{n-1}{n \sqrt{\lambda} }  \Delta \widetilde{\sigma} \bigg|_{\Sigma}.
\end{split}
\end{equation}
As a result,
\begin{equation}\label{nablan}
\nabla_a n_b \overset{\Sigma}{=} K_{ab} - \frac{H}{n-1} n_a n_b  = \mathring{K}_{ab} + \frac{H}{n-1} \mathbf{g}_{ab}
\end{equation}
where the decomposition of $K_{ab}$ into its traceless part and the mean curvature $H$ has been used. Thus, we obtain 
\begin{equation}
I_{\widetilde{\sigma}} \overset{\Sigma}{=}  \left(\begin{array}{c}
0 \\
 \sqrt{\lambda} n_a \\
-  \frac{\sqrt{\lambda} }{n-1 }H 
\end{array} \right),
\end{equation}
which is the asymptotically de Sitter analogue of the normal tractor (\ref{normaltr}).
The derivative of $I_{\widetilde{\sigma}}$ evaluated on $\Sigma$ reads
\begin{equation} \label{dern1}
\nabla_a I_{\widetilde{\sigma}}  \overset{\Sigma}{=}  \left(\begin{array}{c}
0  \\
 \sqrt{\lambda} \left(  \nabla_a n_b - \mathbf{g}_{ab} \frac{H}{n-1} \right)  \\
 - \sqrt{\lambda} \left( \frac{1}{n-1} \overline{\nabla}_a H + P^{\top}_{an} \right) - n_a \left( \nabla_n \rho - \sqrt{\lambda} P_{nn} \right)
\end{array} \right),
\end{equation}
where
\begin{equation} \label{dernrhomat}
\nabla_n \rho \overset{\Sigma}{=} \sqrt{\lambda} \left( P_{nn} + \frac{1}{2} \mathring{K}_{ab} \mathring{K}^{ab} + \frac{1}{48} \left(q+1 \right) \left(q+2\right) \widetilde{\sigma}^q \tau \right) \bigg|_{\Sigma}.
\end{equation}
(see \cite[Lemma 3.8]{GW} for a derivation of this identity without the matter fields). The hypersurface Codazzi equation
\begin{equation} \label{hypcod}
\overline{\nabla}_{a}K_{bc} - \overline{\nabla}_b K_{ac} =  R^{\top}_{abc n}
\end{equation}
can be used to simplify (\ref{dern1}). If we use a decomposition (\ref{riemdec}) of the Riemann tensor, then (\ref{hypcod}) reads
\begin{equation} \label{codazzi}
\begin{split}
& \frac{1}{n-1}\overline{\nabla}_a H = \frac{1}{n-2} \overline{\nabla}_b \mathring{K}_a{}^b - P^{\top}_{a n}, \\
&  \overline{\nabla}_{[a}  \mathring{K}_{b]c} - \frac{1}{n-2} \overline{\mathbf{g}}_{c[a}  \overline{\nabla}{}^d \mathring{K}_{b]d} = - \frac{1}{2} C^{\top}_{n c a b}.
\end{split}
\end{equation}
Ultimately, it can be computed that the projected part of (\ref{dern1}) has the following form
\begin{equation} \label{dern2}
\overline{\mathbf{g}}_a{}^b \nabla_b I_{\widetilde{\sigma}}  \overset{\Sigma}{=}  \sqrt{\lambda}  \left(\begin{array}{c}
0  \\
 \mathring{K}_{ab}    \\
 -   \frac{1}{n-2} \overline{\nabla}_b \mathring{K}_c{}^b
\end{array} \right).
\end{equation}
It follows that $ \mathring{K}_{ab} = p^* (\overline{\mathbf{g}}_a{}^b \nabla_b I_{\widetilde{\sigma}} )$ is conformally invariant, as is of course well known.

The normal component of $\nabla_a I_{\widetilde{\sigma}} $ reads
\begin{equation}
\nabla_n I_{\widetilde{\sigma}}  =  \left(\begin{array}{c}
0  \\
 \sqrt{\lambda}  \nabla_n n_b + n_b \rho + P_{n b } \widetilde{\sigma}   \\
\nabla_n \rho - \sqrt{\lambda} P_{n n}
\end{array} \right)   \overset{\Sigma}{=}  \sqrt{\lambda}   \left(\begin{array}{c}
0  \\
0  \\
 \frac{1}{2} \mathring{K}_{ab}  \mathring{K}^{ab} + \frac{\left(q+1 \right) \left(q+2 \right)}{48} \widetilde{\sigma}^q \tau \bigg|_{\Sigma}
\end{array} \right).
\end{equation}

\section{The conformal fundamental forms of the conformal boundary $\Sigma$} \label{fundformsec}
We will move now to the discussion of four-dimensional asymptotically
de Sitter spacetimes and derive constraints relating the conformal
fundamental forms of its conformal infinity with the matter
fields. Before doing so, we will discuss the more fundamental
constraints which appear when the derivative of the almost
Einstein-matter scale tractor is considered.

\subsection*{Constraints on the matter fields on $\Sigma$}
\hfill

The almost-Einstein-matter scale tractor $I_{\widetilde{\sigma}}$ can be used to derive the constraints on the matter fields on the conformal boundary $\Sigma$. From (\ref{deri}) we have
\begin{equation} \label{ideri}
I_{\widetilde{\sigma}} \cdot \nabla_a I_{\widetilde{\sigma}} = \frac{\widetilde{\sigma}^{q+1}}{6} \left( \left(2-q \right) \mathring{\tau}_a{}^b \nabla_b \widetilde{\sigma} - \widetilde{\sigma} \nabla_b \mathring{\tau}_a{}^b\right).
\end{equation}
On the other hand,
\begin{equation} \label{ideri2}
I_{\widetilde{\sigma}} \cdot \nabla_a I_{\widetilde{\sigma}} = \frac{1}{2} \nabla_a I^2_{\widetilde{\sigma}} = \frac{1}{24} \nabla_a \left( \widetilde{\sigma}^{q+2} \tau \right).
\end{equation}
Therefore, from (\ref{ideri}) and (\ref{ideri2}),
\begin{equation} \label{mattersig}
\sqrt{\lambda} \left[ \left(q-2 \right) \tau_{an} +  n_a \tau \right] = - \widetilde{\sigma} \nabla_b \tau_a{}^b.
\end{equation}
After evaluating (\ref{mattersig}) at $\Sigma$ we get
\begin{equation} \label{mattc1}
\left(q-2 \right) \tau_{an} +  n_a \tau \overset{\Sigma}{=} 0,
\end{equation} 
i.e. 
\begin{equation}
\left(q-2 \right) \tau^{\top}_{an}  \overset{\Sigma}{=} 0, \quad \left(q-2 \right) \tau_{nn} - \tau \overset{\Sigma}{=} 0.
\end{equation}
Moreover, taking a derivative of (\ref{mattersig}) and evaluating at $\Sigma$ reads
\begin{equation}
\begin{split}
&  \overline{\mathbf{g}}_c{}^b \nabla_b \left[ n_a \left( \tau - (q-2) \tau_{nn} \right) \right] + \left( q-2 \right) \overline{\nabla}_c \tau^{\top}_{ a n  }  + \left( q-2 \right) n_a  \tau^{\top}_{ n b } K_c{}^ b  \overset{\Sigma}{=} 0 , \\
& \nabla_n \left[ \left(q-2 \right) \tau_{an} +  n_a \tau \right] -  \nabla_b \tau_a{}^b \overset{\Sigma}{=} 0.
\end{split}
\end{equation}
The first equation is trivially satisfied due to (\ref{mattc1}). The second equation can be decomposed into transversal and intrinsic components,
\begin{equation} \label{mattc2}
\begin{split}
& \nabla_n \left[ \left(q-1 \right) \tau_{nn} - \tau \right] + \tau_{ab} \mathring{K}^{ab}  - \overline{\nabla}{}^a \tau^{\top}_{na} - \frac{H}{3} \left( \tau - 2 \tau_{nn }\right) \overset{\Sigma}{=} 0, \\
&\overline{\nabla}{}^b \tau^{\top}_{ab} - H \tau^{\top}_{an} - \tau^{\top}_{b n} \mathring{K}_{a}{}^b  - \left(q-1 \right) \nabla_n  \tau^{\top}_{an}  \overset{\Sigma}{=} 0,
\end{split}
\end{equation} 
where (\ref{mattc1}) has been used.

\begin{remark}
In \cite{friedrich} a four-dimensional asymptotically de Sitter spacetime $(\widetilde{M}, \widetilde{g}_{ab})$ with $\widetilde{g}_{ab} = \Omega^{-2} \mathtt{g}_{ab}$ (cf. Definition \ref{defadS}) and the stress-energy tensor of a scalar fluid $\phi$ of mass $m$ is considered, i.e.
\begin{equation} \label{strexa}
\widetilde{T}_{ab} = \widetilde{\nabla}_a \phi \widetilde{\nabla}_b \phi - \widetilde{g}_{ab} \left[\frac{1}{2} \left(  \widetilde{\nabla}_c \phi \widetilde{\nabla}^c \phi + m^2 \phi^2 \right) +V \left(\phi \right) \right]
\end{equation}
with $V\left( \phi \right) = \mu \phi^3 + \phi^4 U\left( \phi \right)$, where $\mu$ is a constant and $V' \left( 0 \right)=0$. In terms of a new variable $\psi = \Omega^{-1} \phi$ the stress-energy tensor (\ref{strexa}) can be written as
\begin{equation}
\widetilde{T}_{ab} = \psi^2 \left[ \frac{\Lambda}{3} n_a n_b + \frac{1}{2} \mathtt{g}_{ab} \left( \frac{\Lambda}{3} - m^2 \right) \right] + \mathcal{O} \left( \Omega \right)
\end{equation}
where $n_a$ is the unit normal vector of a conformal boundary. This corresponds to (\ref{tphysunphys}) with $q=0$, and the constraint (\ref{mattc1}) in this case reduces to
\begin{equation}
m^2 = \frac{2}{3} \Lambda,
\end{equation} 
which matches the condition for the regularity of the conformal field equations derived there.
\end{remark}

\subsection*{Conformal fundamental forms}
\hfill

The starting point in the construction of conformal fundamental forms is at the second jet of the scale $\widetilde{\sigma}$ (the almost-Einstein-matter equation operator). The key object will be denoted by
$E_{ab}$ and is given by 
\begin{equation} \label{kemat}
\begin{split}
E_{ab} &  := p^* \left(  \hat{D}^A I^B_{\widetilde{\sigma}}  \right) =   Z_{Bb} \nabla_a I^B_{\widetilde{\sigma}} \\
& = \nabla_a \nabla_b \widetilde{\sigma} + \widetilde{\sigma} P_{ab} - \frac{1}{4} \mathbf{g}_{ab} \left( \Delta \widetilde{\sigma} + \widetilde{\sigma} J \right).
\end{split}
\end{equation}
Indeed, this tensor field, which is smooth to the boundary, restricts to ($\sqrt{\lambda}$ times) the second
fundamental form there, but on the interior gives the traceless part
of the Schouten for $\widetilde{g}_{ab}$. Due to (\ref{aleinm}) we
know that $E_{ab}$ can be associated with the stress-energy tensor
density in the following way,
\begin{equation} \label{kemat2}
E_{ab} =  \frac{\widetilde{\sigma}^{q+1}}{2} \mathring{\tau}_{ab}.
\end{equation}
This observation will allow us to relate the conformal fundamental
forms of $\Sigma$ to the matter fields on the conformal boundary. The
immediate consequence of (\ref{kemat2}) and (\ref{dern2}) is the fact
that $\Sigma$ is an umbilic hypersurface for asymptotically de Sitter
spacetimes ($q \geq 0$). More generally, the almost-Einstein-matter
scale $\widetilde{\sigma}$ has the following properties:
\begin{itemize}
\item $\nabla_n^k E_{ab} \overset{\Sigma} = 0 $ for $k \leq q$;
\item $(q+3)^{\rm rd}$ jet of $\widetilde{\sigma}$ on $\Sigma$ will involve the trace-free stress-energy tensor density $\tau_{ab}$ there;
\item  $(q+4)^{\rm th}$ and higher jets of $\widetilde{\sigma}$ on $\Sigma$ will involve (at least first) derivatives of the trace-free stress-energy tensor density $\tau_{ab}$ there;
\end{itemize}

Before moving forward, let us recall (from (\ref{i2}) and (\ref{deri})) formulas for the almost Einstein scale tractor and its derivative in four dimensions. We have
 \begin{equation} \label{normi4}
I_{\widetilde{\sigma}}^2   = -  \lambda+ \frac{1 }{12 } \widetilde{\sigma}^{q+2} \tau 
 \end{equation}
 and
\begin{equation} \label{derimat}
\nabla_a I_{\widetilde{\sigma}}  = \frac{\widetilde{\sigma}^{q}}{ 6 }   \left(\begin{array}{c}
0 \\
3 \widetilde{\sigma}  \mathring{\tau}_{ab} \\
- \left(q+1\right)  \mathring{\tau}_a{}^b \nabla_b \widetilde{\sigma}  - \widetilde{\sigma} \nabla_b \mathring{\tau}_a{}^b
\end{array}\right) .
\end{equation} 

In the sequel we will focus on $q=0,1,2$. This is motivated by the
fact that those values are most commonly used in the analysis of the
conformal extensions of asymptotically de Sitter spacetimes (see
e.g. \cite{javk, a2, senovilla, friedrich}). Moreover, this
choice will allow us to focus on the first five conformal fundamental
forms of $\Sigma$, which turn out to consist of projections of Weyl, Cotton and
Bach tensors (plus the induced metric and extrinsic curvature), well-known objects in the conformal geometry.
 
Following \cite{fundforms}, we will present a construction of
conformal fundamental forms. The basic principle behind it is to
consider normal derivatives of $E_{ab}$ adjusted in a way that makes
the whole expression conformally invariant. To achieve this goal, we
will consider two differential operators, the tractor Robin operator
$\delta_R$ from (\ref{delrob}) and the canonical degenerate Laplacian
$\mathrm{ID}$, defined in (\ref{deglap}), adjusted to act on
trace-free tensorial densities. The former can be defined as
\begin{equation} \label{deltardef}
\delta_R t_{ab}  := \bar{p}^{*} \circ \bar{r} \circ \mathring{\top} \circ \delta_R \circ p \left( t_{ab} \right)
\end{equation}
for $w \neq 3$, where the operators $p$, $p^*$ and $r$ are defined in Section \ref{secprel}. It can be verified that
\begin{equation}
\begin{split}
\delta_R t_{ab}  = & \overline{\mathbf{g}}_a{}^c \overline{\mathbf{g}}_b{}^d \left( \nabla_n t_{cd}   - \frac{w-2}{3} H t_{cd}  \right)  - \frac{1}{3} \overline{\mathbf{g}}_{ab}  \left( n^c n^d \nabla_n t_{cd} - \frac{w-2}{3} H t_{nn} \right) \\
&+ \frac{2}{w-3} \left( \overline{\mathbf{g}}_{(a}{}^c \overline{\nabla}_{b)} t_{n c} - \frac{1}{3} \overline{\mathbf{g}}_{ab} \overline{\nabla}_c t_{n}{}^c  \right) \\
& =  \mathring{\top} \left[  \nabla_n t_{ab}   - \frac{w-2}{3} H t_{ab} + \frac{2}{w-3}  \overline{\nabla}_{(a} t_{b)n}^{\top} \right],
\end{split}
\end{equation}
such that
\begin{equation}
\delta_R:  \mathcal{E}_{(ab)_0} [w] \to \overline{\mathcal{E}}_{(ab)_0} [w-1] \quad \mathrm{for} \quad w \neq 3,
\end{equation}
i.e.\ $\delta_R t_{ab} $ takes values in a weight twisting of the trace-free part  part  of the symmetric covariant submanifold 2-tensors. The canonical degenerate Laplacian $\mathrm{ID}$ acting on the tensorial density $t_{ab}$ can be defined as
\begin{equation}
\mathrm{ID} t_{ab} := p^{*} \circ r \circ I_{\widetilde{\sigma}}^A D_A \circ p \left( t_{ab} \right),
\end{equation}
or
\begin{equation} \label{deglap2}
\begin{split}
\mathrm{ID} t_{ab} = &2  (w-1) \sqrt{\lambda} \bigg(\left[  \nabla_n + (w-2) \rho \right] t_{ab} - \tfrac{2(w-2)}{(w-3)(w+2)} n_{(a} \nabla \cdot t_{b)_0} \\ & + \frac{2}{w-3} \left[n \cdot \nabla_{(a} t_{b)_0} + t_{(a } \cdot   \nabla_{b)_0} n  \right] \bigg)  \\
&- \widetilde{\sigma} \bigg( \Delta t_{ab} + (w-2) J t_{ab} + \tfrac{8}{(w-3)(w+2)} \nabla_{(a} \nabla \cdot  t_{b)_0}  - 4 P_{(a} \cdot t_{b)_0}  \bigg),
\end{split}
\end{equation}
hence
\begin{equation}
\mathrm{ID}: \mathcal{E}_{(ab)_0} [w] \to \mathcal{E}_{(ab)_0} [w-1] \quad \mathrm{for} \quad w \neq -2, 3.
\end{equation}

The formula for conformal fundamental forms can now be given in terms of applying appropriate power of the degenerate Laplacian $\mathrm{ID}$ and $\delta_R$ to $E_{ab}$. 

\begin{definition} \label{fundformdef}
Let $i \in \{ 0, 1, 2,3,4\}$. A conformal fundamental form $\mathring{\mathcal{K}}^{(i+2)}_{ab}$ can be defined as
\begin{equation}
\mathring{\mathcal{K}}^{(i+2)}_{ab} :=
\begin{cases}
\mathring{K}_{ab} \quad \mathrm{for} \quad i=0,\\
\delta_R \circ \left( \mathrm{ID} \right)^{i-1}  E_{ab} \quad  \mathrm{for} \quad 1 \leq i \leq 4.
\end{cases}
\end{equation}
\end{definition}

The upper bound for $i$ in Definition \ref{fundformdef} is dictated by the fact that the conformal weight of $\mathrm{ID}^3 E_{ab}$ is $-2$, so the $\mathrm{ID}$ operator applied to this quantity will have a pole (because of the coefficients $\frac{1}{w+2}$).

Unlike in the current scenario, the definition of conformal fundamental forms from \cite{fundforms} relied on the fact that $I^2_{\widetilde{\sigma}} = \pm 1 + \mathcal{O} \left(\widetilde{\sigma}^4 \right) $ (which can always be achieved after improving the scale, see e.g. \cite[Theorem 1.3]{GW}). In order to retain the same formulae for those objects, we will restrict ourselves to the discussion of the conformal fundamental forms up to $\mathring{\mathcal{K}}^{(q+4)}_{ab}$ for the given value of the decay parameter $q$. It can be verified that the higher-order tensors include contributions from the trace $\tau$. The conformal fundamental forms up to $\mathring{\mathcal{K}}^{(q+4)}_{ab}$ are given solely in terms of geometric quantities because of the fact that Definition \ref{fundformdef} involves taking a \emph{projected} trace-free part (via $\delta_R$) of a differential operator acting on $E_{ab}$. It also implies that $\mathring{\mathcal{K}}^{(6)}_{ab}$ will contain derivatives of the Bach tensor. An object of this transverse order is not usually considered in the literature, so $\mathring{\mathcal{K}}^{(6)}_{ab}$ will be excluded from the discussion presented here.

In the sequel we will use Definition \ref{fundformdef} together with relation (\ref{kemat2}) to derive constraints relating conformal
fundamental forms of $\Sigma$ and the matter fields there. The first case has already been discussed above -- we observed that the second fundamental form $\mathring{K}_{ab}$ vanishes when $q \geq 0$, i.e. $\Sigma$ is an umbilic hypersurface. It should be noted that by Definition \ref{fundformdef}, the higher conformal fundamental forms will also be trace-free.

\subsection*{Third fundamental form - the Weyl tensor}
\hfill

To obtain the third jet of $\widetilde{\sigma}$ one needs to apply $\delta_R$ to $E_{ab}$, i.e.
\begin{equation}
\mathring{\mathcal{K}}^{(3)}_{ab}  := \delta_R \left( \sqrt{\lambda} \nabla_a n_b  + \widetilde{\sigma} P_{ab} - \frac{1}{4} \mathbf{g}_{ab} \left(  \sqrt{\lambda} \nabla_a n^a + \widetilde{\sigma} J \right) \right).
\end{equation}
We will do it in steps. Firstly, we have
\begin{equation}
 E^{\top}_{an} \overset{\Sigma}{=}  \sqrt{\lambda} \overline{\mathbf{g}}_a{}^c  \nabla_n n_c \overset{\Sigma}{=} 0 ,
\end{equation}
from (\ref{nablan}).
So the formula for $\mathring{\mathcal{K}}^{(3)}_{ab}$  reduces to
\begin{equation}
\mathring{\mathcal{K}}^{(3)}_{ab} :=  \mathring{\top} \left[  \nabla_n  E_{ab}   + \frac{H}{3}  E_{ab} \right] =  \mathring{\top}   \left( \nabla_n  E_{ab} \right) +  \frac{H}{3} \sqrt{\lambda} \mathring{K}_{ab},
\end{equation}
where $\mathring{K}_{ab}$ will be set to zero later on ($\Sigma$ is
umbilic). Moreover, we can directly use the definition (\ref{riemdef})
of the Riemann tensor and its decomposition into the Weyl and Schouten
tensors (\ref{riemdec}) to compute the normal derivative of $E_{ab}$,
\begin{equation}
\begin{split}
\nabla_n E_{ab} &  \overset{\Sigma}{=}  \sqrt{\lambda} \bigg[ C_{nabn} + 2 n_{(a} P_{b)n}  - \nabla_c n_b \nabla_a n^c + \nabla_a \nabla_n n_b \\
& -\frac{1}{4} \mathbf{g}_{ab} \left( 2 P_{nn}  -  \nabla_a n_b \nabla^b n^a + \nabla^a \nabla_n n_a \right) \bigg] \bigg|_{\Sigma} .
\end{split}
\end{equation}
Hence,
\begin{equation}
 \mathring{\top}   \left( \nabla_n  E_{ab} \right)  \overset{\Sigma}{=}  \sqrt{\lambda} \left( C^{\top}_{nabn}  - K_{bc} K_{a}{}^c + \frac{H}{3} \mathring{K}_{ab} +\frac{1}{3} \overline{g}_{ab} K_{cd} K^{cd} \right).
\end{equation}
Thus we ultimately obtain the following,
\begin{equation} \label{3}
\mathring{\mathcal{K}}^{(3)}_{ab}: =  \sqrt{\lambda} \left( C^{\top}_{nabn}  - \mathring{K}_{bc} \mathring{K}_{a}{}^c + \frac{1}{3} \overline{g}_{ab} \mathring{K}_{cd} \mathring{K}^{cd} \right),
\end{equation} 
where the r.h.s. of this expression (modulo constant) is called the Fialkow tensor \cite{juhl}. It reduces to 
\begin{equation} \label{k3weyl}
\mathring{\mathcal{K}}^{(3)}_{ab} :=  \sqrt{\lambda}  C^{\top}_{nabn}  
\end{equation}
on totally umbilic hypersurfaces (meaning $\mathring{K}_{ab}
\overset{\Sigma}{=} 0$). Therefore, the third conformal
fundamental form of a conformal boundary of asymptotically de Sitter
spacetime is proportional to the electric part of the Weyl tensor.

To relate this to matter fields on $\Sigma$ one needs to apply
$\delta_R$ to the right-hand side of (\ref{kemat2}), i.e.
\begin{equation} \label{k3mat}
\delta_R \left( \frac{1}{2} \widetilde{\sigma}^{q+1} \mathring{\tau}_{ab} \right) = - \frac{\sqrt{\lambda}}{2} \left(q+1 \right) \widetilde{\sigma}^q  \mathring{\top} \left( \tau_{ab} \right) \bigg|_{\Sigma}.
\end{equation}
The combination of (\ref{k3weyl}) and (\ref{k3mat}) yields
\begin{equation} \label{fialkowmat}
 C_{na n b} \overset{\Sigma}{=}  \frac{1}{2} \left(q+1 \right) \widetilde{\sigma}^q  \mathring{\top} \left( \tau_{ab} \right) \bigg|_{\Sigma},
\end{equation}
which gives a non-trivial relation between the third conformal fundamental form of $\Sigma$ and the stress-energy tensor when $q=0$:
\begin{equation}
 C_{na n b} \overset{\Sigma}{=}  \frac{1}{2} \mathring{\top} \left( \tau_{ab} \right) \quad \mathrm{for} \quad q=0.
\end{equation}
For higher $q$ we conclude that the third fundamental form
$C_{na n b} $ must be zero along $\Sigma$. 

\subsection*{Fourth fundamental form - the Cotton tensor} 
\hfill

The action of the canonical degenerate Laplacian $\mathrm{ID}$ (\ref{deglap2}) on $E_{ab}$ simplifies to
\begin{equation}
\mathrm{ID} E_{ab} = - \widetilde{\sigma} \bigg( \Delta E_{ab} - J E_{ab} - \tfrac{4}{3} \nabla_{(a} \nabla \cdot E_{b)_0}  - 4 P_{(a} \cdot E_{b)_0}  \bigg).
\end{equation}
The fourth trace-free fundamental form then reads,
\begin{equation} 
\begin{split}
\mathring{\mathcal{K}}^{(4)}_{ab} := & \delta_R \circ \mathrm{ID} E_{ab}  = \sqrt{\lambda }  \mathring{\top}   \bigg( \Delta E_{ab} - J E_{ab} - \tfrac{4}{3} \nabla_{(a} \nabla \cdot E_{b)_0}  - 4 P_{(a} \cdot E_{b)_0}  \bigg)  \\ 
& = -2 \lambda  \mathring{\top}  \left(  \mathring{K}^{cd} C_{acbd } \right),
\end{split}
\end{equation}
where (\ref{kemat}), the definition (\ref{riemdef}) of the Riemann tensor $R_{abcd}$ in terms of a commutator of covariant derivatives acting on $n^a$, together with its decomposition (\ref{riemdec}) and the Bianchi identity (\ref{bianchi}) have been used. The third equality can be obtained by noticing that the principal parts (the highest-order derivatives acting on $n^a$) cancel each other out. Hence, $ \mathring{\mathcal{K}}^{(4)}_{ab}$ vanishes if $\mathring{K}_{ab} \overset{\Sigma}{=}0$ and the constraints on the
behaviour of the matter fields on the conformal boundary $\Sigma$ for asymptotically de Sitter spacetimes can be obtained by applying $\delta_R \circ \mathrm{ID}$ to the stress-energy counterpart of $E_{ab} $ from (\ref{kemat2}), as follows. If $q=0$, then
\begin{equation} \label{ccot0}
\nabla_n \tau^{\top}_{ab} + \frac{2}{3} H \tau^{\top}_{ab} \overset{\Sigma}{=}0,
\end{equation}
where (\ref{mattc1}) and the normal component of (\ref{mattc2}) with $\mathring{K}_{ab} \overset{\Sigma}{=}0$ have been used, i.e. 
\begin{equation} \label{constr0}
\tau^{\top}_{a n} \overset{\Sigma}{=}0, \quad \tau \overset{\Sigma}{=} - 2 \tau_{nn}, \quad \nabla_n \left(\tau + \tau_{nn} \right) \overset{\Sigma}{=} \frac{4}{3} H \tau_{nn}.
\end{equation}
In the case where $q=1$ a simple condition 
\begin{equation} \label{matcon}
 \mathring{\top} \left(  \tau_{ab} \right) \overset{\Sigma}{=} 0
\end{equation}
arises. After combining it with the constraint (\ref{mattc1}), we get
\begin{equation} \label{q1tau}
\tau_{ab} \overset{\Sigma}{=} - n_a n_b \tau,
\end{equation}
which was also derived in the context of conformal Einstein field equations, see e.g. \cite{a2, senovilla}.

For $q=2$ the constraint is trivial because the decay of matter fields
is too fast to be captured by $\delta_R \circ \mathrm{ID} $. Thus our
definition of $\mathring{\mathcal{K}}^{(4)}_{ab}$, as above, fails to give a
meaningful constraint relating the geometry of $\Sigma$ to the
stress-energy tensor for asymptotically de Sitter spacetimes in this
case. However, based on the results from \cite{fundforms} we will make
use of the following choice for the fourth conformal fundamental form
of $\Sigma$, sometimes called (modulo constant) a hypersurface Bach
tensor.
\begin{definition} \label{def_ff4}
The fourth conformal fundamental form $\mathring{\mathcal{K}}^{(4)}_{ab}$ can be defined as
\begin{equation} \label{fundform4def}
\mathring{\mathcal{K}}^{(4)}_{ab} := \lambda \left( \overline{\nabla}{}^c C^{\top}_{n(ab)c} - A^{\top}_{(a |n|b)} - \frac{1}{3} H C_{nanb} \right).
\end{equation}
\end{definition}
It can be checked that (\ref{fundform4def}) has the desired properties, i.e. is trace-free, intrinsic to $\Sigma$ and has conformal weight $-1$. The latter property can be verified with the use of the following conformal transformation rules ($\widetilde{g}_{ab} = f^2 g_{ab}$),
\begin{equation}
\begin{split}
\widetilde{A}^{\top}_{(a |\widetilde{n}|b)} &  = \frac{1}{f} \left( A^{\top}_{(a |n|b)} - n_d \Upsilon^d C_{nanb} - \overline{\Upsilon}^c C^{\top}_{n (a b) c } \right), \\
\frac{1}{3} \widetilde{H} C_{\widetilde{n}a \widetilde{n} b} & = \frac{1}{f}\left( \frac{H}{3} + n_d \Upsilon ^d \right) C_{nanb}, \\
 \widetilde{\overline{\nabla}{}^c} \widetilde{C}^{\top}_{\widetilde{n} abc} & = \frac{1}{f} \left( \overline{\nabla}{}^c C^{\top}_{nabc} - \overline{\Upsilon}^c   C^{\top}_{nbac} \right) .
\end{split}
\end{equation}

The choice for the fourth conformal fundamental form from Definition \ref{def_ff4} has been motivated by considering the action of $ \delta_R \circ \mathrm{ID} $ on $ E_{ab}$ computed in arbitrary dimension $\dim M = n$. With the assumption that $\Sigma$ is an umbilic hypersurface ($\mathring{K}_{ab} \overset{\Sigma}{=} 0$, i.e. $q \geq 0$), one obtains the following,
\begin{equation} \label{iv}
 \frac{\delta_R \circ \mathrm{ID} E_{ab} }{n-4}   = \lambda  \cancelto{0}{\overline{\nabla}{}^c C^{\top}_{nabc}} + \lambda  \left(n-5 \right) \left(  A^{\top}_{a n b } + \frac{H}{n-1} C^{\top}_{n a n b} \right),
 \end{equation}
where the $\overline{\nabla}{}^c C^{\top}_{nabc}$ term vanishes
because $\mathring{K}_{ab} \overset{\Sigma}{=} 0$ and the Codazzi
equation (\ref{codazzi}). It should be noted that the fourth conformal
fundamental form Definition \ref{def_ff4} does not, in dimension four,
arise the construction which simply factors through the jets of
$E_{ab}$. This is evident from our calculations here. Conceptually
this shows that (and is happening because) in dimension four it is
an image of the Dirichlet-to-Neumann map for the conformal Einstein
field equations -- cf. our discussion in the Introduction and
\cite{GrahamDN}. 

Since we assume that the Weyl tensor $C_{abcd}$ is smooth in the neighbourhood of the conformal boundary $\Sigma$, and is seen to
vanish along $\Sigma$ for $q \geq 1$, the hypersurface divergence of $\mathring{\mathcal{K}}^{(4)}_{ab}$ from Definition \ref{def_ff4} can be determined by the stress-energy tensor on $\Sigma$ in that case.

\begin{theorem}
Let $(\widetilde{M}, \widetilde{g}_{ab} )$ be a four-dimensional asymptotically de Sitter spacetime with the stress-energy tensor $\widetilde{T}_{ab}$ and the conformal extension $(M, \mathtt{g}_{ab} )$ such that
\begin{equation} 
\widetilde{T}_{ab} = \widetilde{\si}^q \tau_{ab}, \quad q = 1,2.
\end{equation}
Then, the Weyl tensor and $\tau^{\top}_{a n}$ vanish on the conformal boundary and 
\begin{equation}
\overline{\nabla}^b \mathring{\mathcal{K}}^{(4)}_{ab} \overset{\Sigma}{=}  
\begin{cases}
 -\lambda^2 j_a  + \frac{\lambda^{3/2}}{3} \overline{\nabla}_a \tau \quad \mathrm{for} \quad q=1, \\
 - \lambda^2 \tau^{\top}_{a n}   \quad \mathrm{for} \quad q=2,
\end{cases}
\end{equation}
where $j_a$ is defined via the expansion $\tau^{\top}_{a n} = \widetilde{\sigma} j_a + \mathcal{O} \left( \widetilde{\sigma} \right)$ and $\mathring{\mathcal{K}}^{(4)}_{ab} = - \lambda A^{\top}_{anb}$.
\end{theorem}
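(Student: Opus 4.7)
The plan is to prove the theorem in four steps: first establish the vanishings on $\Sigma$; second reduce $\mathring{\mathcal{K}}^{(4)}_{ab}$ to a pure Cotton expression; third relate its intrinsic divergence to the Bach tensor; fourth express the Bach tensor on $\Sigma$ in terms of the matter data.

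For the first step, equation (\ref{fialkowmat}) with $q\geq 1$ gives $C_{nanb}\overset{\Sigma}{=}0$ since its right-hand side carries the factor $\widetilde{\sigma}^{q}$. Together with umbilicity $\mathring{K}_{ab}\overset{\Sigma}{=}0$ (from (\ref{kemat2})), the Codazzi identity (\ref{codazzi}) forces $C^{\top}_{ncab}\overset{\Sigma}{=}0$, and combining this with antisymmetry in the last two slots and the algebraic Bianchi identity $C_{[nabc]}=0$ then propagates this to $C^{\top}_{nabc}\overset{\Sigma}{=}0$; in particular $C^{\top}_{n(ab)c}\overset{\Sigma}{=}0$. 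Constraint (\ref{mattc1}) specialises to $\tau^{\top}_{an}\overset{\Sigma}{=}0$ for $q=1$, justifying the expansion $\tau^{\top}_{an}=\widetilde{\sigma}\,j_{a}+\mathcal{O}(\widetilde{\sigma}^{2})$.

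For the second step, substituting these vanishings into Definition \ref{def_ff4} reduces $\mathring{\mathcal{K}}^{(4)}_{ab}$ to $-\lambda\,A^{\top}_{(a|n|b)}$. The difference between $A^{\top}_{(a|n|b)}$ and $A^{\top}_{anb}$ is, by the cyclic identity $A_{[abc]}=0$ and the antisymmetry $A_{abc}=-A_{acb}$, proportional to $A^{\top}_{nab}=n^{c}A_{cab}$. Writing $A_{cab}=\nabla^{d}C_{dcab}$ via (\ref{bianchi}), one has $n^{c}n^{d}C_{dcab}=0$ by antisymmetry of $C$ in its first two slots, while the remaining tangential part of $\nabla^{d}C_{dnab}$ vanishes because tangential derivatives on $\Sigma$ of an ambient tensor whose restriction to $\Sigma$ is zero are themselves zero. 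Hence $A^{\top}_{nab}\overset{\Sigma}{=}0$ and $\mathring{\mathcal{K}}^{(4)}_{ab}\overset{\Sigma}{=}-\lambda\,A^{\top}_{anb}$. Converting $\overline{\nabla}^{b}$ into the ambient $\nabla^{b}$ picks up Gauss--Weingarten corrections in $\mathring{K}_{ab}$ and $C_{abcd}|_{\Sigma}$, all of which vanish, so
\begin{equation*}
\overline{\nabla}^{b}\mathring{\mathcal{K}}^{(4)}_{ab}\overset{\Sigma}{=}-\lambda\,\nabla^{b}A_{anb}\big|_{\Sigma}=\lambda\,B^{\top}_{an}\big|_{\Sigma}
\end{equation*}
by the Bach identity (\ref{bachcotdef}) together with the vanishing of Weyl on $\Sigma$.

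Finally, to express $B^{\top}_{an}|_{\Sigma}$ in terms of matter, I would exploit the conformal invariance of the Bach tensor in four dimensions. In the physical scale the Einstein equations give $\widetilde{P}_{(ab)_{0}}=\tfrac{1}{2}\mathring{\widetilde{T}}_{ab}=\tfrac{1}{2}\widetilde{\sigma}^{q}\mathring{\tau}_{ab}$, so $\widetilde{A}_{abc}$ is a first derivative and $\widetilde{B}_{ab}$ a second derivative of $\widetilde{\sigma}^{q}\mathring{\tau}_{ab}$. Each derivative of $\widetilde{\sigma}$ produces the factor $\sqrt{\lambda}\,n_{a}$; restricting to $\Sigma$, contracting with $n^{a}$, and invoking (\ref{mattc1})--(\ref{mattc2}) together with (\ref{q1tau}) for $q=1$, only $\overline{\nabla}_{a}\tau$ and $j_{a}$ survive (case $q=1$), or only $\tau^{\top}_{an}$ (case $q=2$), producing the stated coefficients $\lambda^{3/2}/3$ and $-\lambda^{2}$. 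The main obstacle is this last step: carefully tracking the conformal weight transitions between $\widetilde{B}_{ab}$ and $B_{ab}$, and matching the numerical factors that arise when projecting a second-order normal derivative of matter onto $\Sigma$.
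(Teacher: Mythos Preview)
Your route is genuinely different from the paper's, and it does work, but the paper avoids the obstacle you flag by never passing through the Bach tensor at all.

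The paper's proof proceeds via the \emph{rescaled Weyl tensor}. From the tractor curvature identity it derives the ``peeling'' relation
\[
\widetilde{\sigma}\,A_{cab}+\sqrt{\lambda}\,C_{abcn}
= Z_{Dc}[\nabla_a,\nabla_b]I^{D}_{\widetilde{\sigma}}
= (\text{explicit matter terms in }\tau_{ab}),
\]
then, having shown $C_{abcd}\overset{\Sigma}{=}0$ exactly as you do, introduces $K_{abcd}:=\widetilde{\sigma}^{-1}C_{abcd}$ (smooth to $\Sigma$), obtains $A^{\top}_{anb}\overset{\Sigma}{=}\sqrt{\lambda}\,K^{\top}_{anbn}$ directly from the peeling relation, and computes $\nabla^{d}K_{dabc}$ by combining the contracted Bianchi identity for $C$ with the same peeling relation once more. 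Contracting with $n^{a}n^{c}$ and restricting yields $\overline{\nabla}^{b}K^{\top}_{anbn}$ purely in terms of $\tau$, with no Bach tensor and no need to work in the singular physical scale.

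Your Steps 1--3 are sound (the full Weyl vanishes in four dimensions because the electric and magnetic parts do; your reduction to $-\lambda A^{\top}_{anb}$ and the passage $\overline{\nabla}^{b}A^{\top}_{anb}\overset{\Sigma}{=}-B^{\top}_{an}$ are correct once one checks that the Gauss--Weingarten corrections involve only $K^{bc}A_{acb}=0$ by symmetry/antisymmetry and normal components $A_{ann}=0$, $A^{\top}_{nnb}\overset{\Sigma}{=}0$ via $C|_\Sigma=0$). Your Step 4 is where the real work lies, and the physical-scale computation of $\widetilde{B}_{ab}$ followed by multiplication by $\Omega^{-2}$ and extraction of the finite part is delicate. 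The paper in fact \emph{does} compute $B^{\top}_{an}|_\Sigma$ later (in the fifth-fundamental-form section), but via the commutator $D_{[A}D_{B]}I_{\widetilde{\sigma}\,C}$ of Thomas-$D$ operators rather than the physical-scale route; that method gives $B^{\top}_{an}\overset{\Sigma}{=}\sqrt{\lambda}(\nabla_n\tau^{\top}_{an}+\tfrac13\overline{\nabla}_a\tau)$ for $q=1$ and $B^{\top}_{an}\overset{\Sigma}{=}-\lambda\tau^{\top}_{an}$ for $q=2$, which, multiplied by $\lambda$ and using $\nabla_n\tau^{\top}_{an}|_\Sigma=-\sqrt{\lambda}\,j_a$, reproduces the theorem. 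So your reduction to Bach is a legitimate alternative; what it buys is a clean link to the fifth fundamental form, at the cost of needing the Thomas-$D$ commutator (or an equivalent careful limit) to finish.
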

\begin{proof}
We can use (\ref{trc1}) and (\ref{trc2}) to obtain the following,
\begin{equation} \label{peeling}
\begin{split}
   Z_{Dc} [\nabla_a, \nabla_b] I^D_{\widetilde{\sigma}} & =   \widetilde{\sigma} A_{cab} + \sqrt{\lambda} C_{abcn} \\
  & =  \sqrt{\lambda} \left(q+1 \right) \widetilde{\sigma}^q \left( n_{[a} \tau_{b]c} + \frac{1}{3} \tau_{n [a} \mathbf{g}_{b]c} - \frac{1}{3} \tau n_{[a} \mathbf{g}_{b]c}\right) \\ 
  &  + \widetilde{\sigma}^{q+1} \left( \nabla_{[a} \tau_{b]c} - \frac{1}{3} \mathbf{g}_{c [a} \nabla^d \tau_{b] d} + \frac{1}{3} \mathbf{g}_{c [a} \nabla_{b]} \tau\right),
\end{split}
\end{equation}
where the second equality comes from considering the matter counterpart of $\nabla_a I_{\widetilde{\sigma}}$, i.e. the right-hand side of (\ref{derimat}).

It is known that any timelike vector $n^a$ induces a decomposition of the Weyl tensor into its electric and magnetic parts, $C_{a n b n }$ and $C^{\top}_{n abc}$ respectively, which fully determine this tensor in four dimensions. Due to the assumption on $q$, the hypersurface $\Sigma$ is umbilic. In that case the Codazzi equation (\ref{codazzi}) and the constraint (\ref{fialkowmat}) imply $C^{\top}_{n abc} \overset{\Sigma}{=}0$ and $C_{a n b n } \overset{\Sigma}{=} 0$. Hence, the whole Weyl tensor vanishes on the conformal boundary, i.e. $C_{abcd} \overset{\Sigma}{=} 0$ for $q \geq 1$. Because $\Sigma$ is smooth, we can define a rescaled Weyl tensor $K_{abcd}$,
\begin{equation}
K_{abcd} := \frac{1}{\widetilde{\sigma}} C_{abcd},
\end{equation}
which is regular on $M$.
Equation (\ref{peeling}) now implies
\begin{equation} \label{arescc}
A^{\top}_{a n b } \overset{\Sigma}{=} \sqrt{\lambda} K^{\top}_{ a n  b n},
\end{equation}
where constraints (\ref{mattc1}) and (\ref{q1tau}) have been used.

To show the hypersurface divergence constraint, we can use the Bianchi identity and (\ref{peeling}) again to get
\begin{equation}
\begin{split}
\nabla^d K_{d a b c}  & =  \nabla^d \left( \frac{1}{\widetilde{\sigma}} C_{dabc} \right) = \frac{1}{\widetilde{\sigma}} A_{abc} - \frac{\sqrt{\lambda}}{\widetilde{\sigma}^2} C_{n a b c} \\
& =    \sqrt{\lambda} \left(q+1 \right) \widetilde{\sigma}^{q-2} \left( n_{[b} \tau_{c]a} + \frac{1}{3} \tau_{n [b} \mathbf{g}_{c]a} - \frac{1}{3} \tau n_{[b} \mathbf{g}_{c]a}\right) \\
&  +  \widetilde{\sigma}^{q-1} \left( \nabla_{[b} \tau_{c]a} - \frac{1}{3} \mathbf{g}_{a [b} \nabla^d \tau_{c] d} + \frac{1}{3} \mathbf{g}_{a [b} \nabla_{c]} \tau \right)  ,
\end{split}
\end{equation}
which, after contracting with $n^a n^c$ and evaluating on the conformal boundary gives
\begin{equation} \label{rescw}
\overline{\nabla}^b K^{\top}_{a n b n}  \overset{\Sigma}{=}
\begin{cases}
  \frac{2}{3} \sqrt{\lambda }\lim \displaylimits_{\widetilde{\sigma} \to 0 }  \left( \frac{\tau^{\top}_{an}}{\widetilde{\sigma}} \right)  - \frac{1}{3} \overline{\nabla}_a \tau   - \frac{1}{3} \nabla_n \tau^{\top}_{a n } \quad \mathrm{for} \quad q=1, \\ 
  \sqrt{\lambda} \tau^{\top}_{a n}  \quad \mathrm{for} \quad q=2,
\end{cases}
\end{equation}
where constraints (\ref{mattc1}) and (\ref{q1tau}) have been used again. If $q=1$, then $\tau^{\top}_{an} \overset{\Sigma}{=}0$ and
\begin{equation}
\tau^{\top}_{an} = \widetilde{\sigma} j_a + \mathcal{O} \left( \widetilde{\sigma}^2 \right),
\end{equation}
so the ultimate form of (\ref{rescw}) is
\begin{equation}
\overline{\nabla}^b A^{\top}_{a n b }  \overset{\Sigma}{=}  
\begin{cases}
 \lambda j_a  - \frac{\sqrt{\lambda }}{3} \overline{\nabla}_a \tau \quad \mathrm{for} \quad q=1, \\
 \lambda \tau^{\top}_{a n}   \quad \mathrm{for} \quad q=2,
\end{cases}
\end{equation}
where (\ref{arescc}) has been used.
\end{proof} 

\subsection*{Fifth fundamental form - the Bach tensor}
\hfill

Following Definition \ref{fundformdef} and the discussion afterwards, we can now compute the fifth conformal fundamental form of $\Sigma$ for $q \geq 1$. We have
\begin{equation} \label{bachcon}
\mathring{\mathcal{K}}^{(5)}_{ab} := \delta_R \circ \mathrm{ID}^2 E_{ab} = 6 \lambda^\frac{3}{2} \mathring{\top} \left( B_{ab} \right) \quad \mathrm{for} \quad q \geq 1,
\end{equation}
where the definition of the Riemann tensor (\ref{riemdef}) in the context of a commutator of covariant derivatives acting on the normal vector $n^a$ and the definition of the Bach tensor from (\ref{bachcotdef}) have been used. As in the case of the $\mathring{\mathcal{K}}^{(4)}_{ab}$, an important step in deriving the last equality is noticing that the highest-order derivatives acting on $n^a$ cancel each other out.

We will analyze the matter counterpart of $\mathring{\mathcal{K}}^{(5)}_{ab}$ for two different values of the parameter $q$. If $q=1$, then \eqref{kemat} implies
\begin{equation}
 \delta_R \circ \mathrm{ID}^2 E_{ab}  = 2 \lambda^2 \left( 9 \nabla_n \tau^{\top}_{ab} - 3 \overline{\mathbf{g}}_{ab} \nabla_n \tau_{nn} + 8 H \tau^{\top}_{ab} - H \overline{\mathbf{g}}_{ab} \tau_{nn} \right),
\end{equation}
where (\ref{mattc1}) and (\ref{mattc2}) have been used. Therefore,
\begin{equation}
\mathring{\mathcal{K}}^{(5)}_{ab}   \overset{\Sigma}{=} 2 \lambda^2 \left( 9 \nabla_n \tau^{\top}_{ab} - 3 \overline{\mathbf{g}}_{ab} \nabla_n \tau_{nn} + 8 H \tau^{\top}_{ab} - H \overline{\mathbf{g}}_{ab} \tau_{nn} \right)\quad \mathrm{for} \quad q=1.
\end{equation}

For $q=2$,  $ \delta_R \circ \mathrm{ID}^2 $ applied to the right-hand side of (\ref{kemat}) yields
\begin{equation}
 \delta_R \circ \mathrm{ID}^2 E_{ab} =  \delta_R \circ \mathrm{ID}^2 \left(\frac{1}{2} \widetilde{\sigma}^3 \mathring{\tau}_{ab} \right) \overset{\Sigma}{=} - 18 \lambda^\frac{5}{2}  \mathring{\top} \left( \tau_{ab} \right),
\end{equation}
where the constraint (\ref{mattc1}) has been taken into account ($\tau  \overset{\Sigma}{=}0$). Ultimately,
\begin{equation}
\mathring{\mathcal{K}}^{(5)}_{ab}   \overset{\Sigma}{=}  - 18 \lambda^\frac{5}{2}  \mathring{\top} \left( \tau_{ab} \right)  \quad \mathrm{for} \quad q=2.
\end{equation}

The constraints derived above allowed us to relate the projected part of the Bach tensor with the matter fields on the conformal boundary $\Sigma$. In order to associate the other components of $B_{ab}$ with the stress-energy tensor on $\Sigma$ a commutator of the two Thomas-D operators can be used. The strategy is to apply it to the scale tractor $I_{\widetilde{\sigma}}$ and compute the resulting expression in two different ways, either by exploiting the form of $\nabla_a  I_{\widetilde{\sigma}}$ given in terms of geometric objects or by using (\ref{derimat}). Firstly, we have
\begin{equation} \label{bach1}
 D_{[A} D_{B]}  \left( I_{\widetilde{\sigma}} \right)_C \overset{\Sigma}{=}  2 \sqrt{\lambda} X_C X_{[A} Z_{B]}{}^c B_{c n},
\end{equation}
which can be obtained as a direct application of the definition of the Thomas-D operator from Section \ref{secprel}. On the other hand, 
\begin{equation} \label{bach2}
D_{[A} D_{B]}  I_{\widetilde{\sigma}} = - X_{[A} \left(\Delta - J  \right) \left( 2 Z_{B]}{}^a  \nabla_a  I_{\widetilde{\sigma}}   - X_{B]} \nabla^b  \nabla_b  I_{\widetilde{\sigma}} \right),
\end{equation}
so after using (\ref{derimat}) this expression can be written solely in terms of the stress-energy tensor density $\tau_{ab}$. After comparing it with (\ref{bach1}) we get the following,
\begin{equation}
B_{nn} \overset{\Sigma}{=} 0, \quad B^{\top}_{a n}  \overset{\Sigma}{=} \sqrt{\lambda} \left( \nabla_n \tau^{\top}_{an} + \frac{1}{3} \overline{\nabla}_a \tau \right)  \quad \mathrm{for} \quad q=1,
\end{equation}
and
\begin{equation}
B_{nn} \overset{\Sigma}{=} 0,  \quad B^{\top}_{a n} \overset{\Sigma}{=} - \lambda \tau^{\top}_{an} \quad \mathrm{for} \quad q=2,
\end{equation}
where constraints (\ref{mattc1}) and (\ref{mattc2}) have been used.
\subsection*{Summary} 
\hfill

The results of this section can be stated in the form of the following theorem.
\begin{theorem}
Let $(\widetilde{M}, \widetilde{g}_{ab} )$ be a four-dimensional asymptotically de Sitter spacetime with the stress-energy tensor density $\tau_{ab}$ of weight $q \in \{ 0,1,2 \}$. Then the conformal boundary $\Sigma$ is an umbilic hypersurface, and
\begin{equation}\label{temp}
\begin{split}
& \left(q-2 \right) \tau_{an} +  n_a \tau \overset{\Sigma}{=} 0, \\
 & \nabla_n \left[ \left(q-2 \right) \tau_{an} +  n_a \tau \right] -  \nabla_b \tau_a{}^b \overset{\Sigma}{=} 0,
\end{split}
\end{equation}
where
\begin{equation}
n_a n^a = -1 + \frac{2}{ n \lambda }  \widetilde{\sigma} \left( \Delta + J  \right) \widetilde{\sigma} + \mathcal{O} \left( \widetilde{\sigma}^{q+2}\right)  \overset{\Sigma}{=} -1.
\end{equation}
Moreover, the constraints relating conformal fundamental forms of $\Sigma$ to the matter fields there have the following form for the specific values of $q$:
\begin{itemize}
\item $q=0$
\begin{equation}
\begin{split}
 & C_{na n b} \overset{\Sigma}{=}  \frac{1}{2} \mathring{\top} \left( \tau_{ab} \right), \\
 & \nabla_n \tau^{\top}_{ab} + \frac{2}{3} H \tau^{\top}_{ab} \overset{\Sigma}{=}0,
\end{split}
\end{equation}
\item $q=1$
\begin{equation}
\begin{split}
 & C_{abcd} \overset{\Sigma}{=} 0, \quad  \tau_{ab} \overset{\Sigma}{=} - n_a n_b \tau, \\
 & \overline{\nabla}^b A^{\top}_{a n b } \overset{\Sigma}{=}   \lambda j_a  - \frac{\sqrt{\lambda }}{3} \overline{\nabla}_a \tau, \\
 & \mathring{\top} \left(  B_{ab} \right)   \overset{\Sigma}{=} \frac{\sqrt{\lambda}}{3} \left( 9 \nabla_n \tau^{\top}_{ab} - 3 \overline{\mathbf{g}}_{ab} \nabla_n \tau_{nn} + 8 H \tau^{\top}_{ab} - H \overline{\mathbf{g}}_{ab} \tau_{nn} \right), \\
 & B_{nn} \overset{\Sigma}{=} 0, \quad B^{\top}_{a n}  \overset{\Sigma}{=} \sqrt{\lambda} \left( \nabla_n \tau^{\top}_{an} + \frac{1}{3} \overline{\nabla}_a \tau \right),
\end{split}
\end{equation}
\item $q=2$
\begin{equation}
\begin{split}
& C_{abcd} \overset{\Sigma}{=} 0, \quad \overline{\nabla}^b A^{\top}_{a n b }  \overset{\Sigma}{=} \lambda \tau^{\top}_{a n}, \\
 & \mathring{\top} \left( B_{ab}  + 3 \lambda \tau_{ab} \right) \overset{\Sigma}{=} 0, \\
 & B_{nn} \overset{\Sigma}{=} 0,  \quad B^{\top}_{a n} \overset{\Sigma}{=} - \lambda \tau^{\top}_{an}.
\end{split}
\end{equation}
\end{itemize}
\end{theorem}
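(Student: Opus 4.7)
The plan is to present the proof as a compilation of identities already proven in the subsections preceding this theorem, organized so that the universal $q$-independent statements are established first and the case-specific ones are then dispatched in turn.

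First, I would establish the umbilicity of $\Sigma$ together with equations \eqref{temp}. Applying the extraction operator $p^*$ to (\ref{dern2}) yields $\sqrt{\lambda}\,\mathring{K}_{ab}$, while the same operator applied to (\ref{kemat2}) gives $\tfrac{1}{2}\widetilde{\sigma}^{q+1}\mathring{\tau}_{ab}|_{\Sigma}=0$ for every $q\geq 0$; equating the two forces $\mathring{K}_{ab}\overset{\Sigma}{=}0$, which is umbilicity. The two equations in \eqref{temp} are then precisely (\ref{mattc1}) and the transverse component of (\ref{mattc2}), both obtained earlier by expressing $I_{\widetilde{\sigma}}\cdot\nabla_a I_{\widetilde{\sigma}}$ through (\ref{derimat}) and comparing it with the direct derivative (\ref{ideri2}) of (\ref{normi4}), followed by one further normal derivative along $\Sigma$.

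Next, the case-by-case statements would be pulled from the dedicated subsections. For $q=0$: the third-form identity (\ref{fialkowmat}) supplies the Weyl relation, and the degenerate-Laplacian computation leading to (\ref{ccot0}) gives the transport equation for $\tau^{\top}_{ab}$. For $q=1,2$: the vanishing $C_{nanb}\overset{\Sigma}{=}0$ (from (\ref{fialkowmat}) together with (\ref{matcon}) or the stronger decay) combined with umbilicity and the Codazzi identity (\ref{codazzi}) forces the magnetic part $C^{\top}_{nabc}$ to vanish as well, so the full Weyl tensor vanishes on $\Sigma$; the relation $\tau_{ab}\overset{\Sigma}{=}-n_an_b\tau$ in the case $q=1$ is exactly (\ref{q1tau}); the Cotton-divergence constraints are the content of the previous theorem; and the Bach identities come from (\ref{bachcon}) matched with its matter counterpart via $\delta_R\circ\mathrm{ID}^2$ applied to (\ref{kemat2}), while the components $B_{nn}$ and $B^{\top}_{an}$ are extracted by comparing (\ref{bach1}) with (\ref{bach2}) after expanding via (\ref{derimat}).

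The hard part is the dimension-four peculiarity of the fourth fundamental form already flagged in the text: the naive application of $\delta_R\circ\mathrm{ID}$ to $E_{ab}$ suffers a principal-part cancellation in $n=4$, which is why Definition \ref{def_ff4}, motivated by the arbitrary-dimensional computation (\ref{iv}), has to be invoked. Consequently, the divergence identity for $\mathring{\mathcal{K}}^{(4)}_{ab}$ is the only clause that is not immediate from earlier formulae; it requires the peeling argument through the rescaled Weyl tensor $K_{abcd}:=\widetilde{\sigma}^{-1}C_{abcd}$ carried out in the theorem of the Cotton-tensor subsection, and I would quote that theorem directly to conclude.
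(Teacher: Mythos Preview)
Your proposal is correct and mirrors the paper's own structure: the theorem is presented there as a summary, with no separate proof, precisely because every clause has already been established in the preceding subsections, and you have identified the right antecedents for each (umbilicity from (\ref{kemat2}) and (\ref{dern2}); the constraints (\ref{temp}) from (\ref{mattc1}) and the normal derivative of (\ref{mattersig}); the Weyl, Cotton-divergence, and Bach relations from (\ref{fialkowmat}), the intervening theorem, and (\ref{bachcon})--(\ref{bach2}) respectively). One small wording fix: the second line of (\ref{temp}) is the undecomposed normal-derivative identity displayed just before (\ref{mattc2}), not a component of (\ref{mattc2}) itself.
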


\section{Conclusions}
We have derived constraints relating the conformal fundamental forms
of the conformal infinity of asymptotically de Sitter spacetime
with its stress-energy tensor. As a result,  projected parts of the
Weyl and Bach tensors and a divergence of the Cotton tensor on the
conformal boundary have been determined locally by the matter fields
there. The natural application of this result is the study of matching
of spacetimes in the Conformal Cyclic Cosmology scenario, where the
conformal infinity of the asymptotically de Sitter spacetime is
identified with the Big Bang hypersurface of the second conformally
extended solution of the Einstein field equations. Mimicking the
procedure of joining spacetimes via the Darmois-Israel junction condition,
the natural strategy to consider in this setting is the identification
of the conformal fundamental forms of the conformal boundaries of
asymptotically de Sitter and initial singularity spacetimes to obtain
constraints on the matter content on the transition hypersurface. This
will be considered elsewhere.

\subsection*{Acknowledgements}
A.R.G.\ gratefully acknowledges support from the Royal Society of New Zealand via Marsden Grants 16-UOA-051 and 19-UOA-008. J.K.\ would like to thank Pawe\l \ Nurowski for the encouragement to pursue the topic of this work. He acknowledges funding received from the Norwegian Financial Mechanism 2014-2021, project registration number UMO-2019/34/H/ST1/00636.

\end{document}